\newtheorem{Theorem}{Theorem}
\newtheorem*{Theorem*}{Theorem}
\newtheorem{Assumption}{Assumption}
\def\ci{\mbox{\ensuremath{\perp\!\!\!\perp}}}
\def\nci{\not\!\perp\!\!\!\perp}
\def\E{\mathbb{E}}
\DeclareMathOperator*{\argmin}{arg\,min}
\begin{document}

\def\spacingset#1{\renewcommand{\baselinestretch}%
{#1}\small\normalsize} \spacingset{1}

\title{\bf A Class of Semiparametric Tests of Treatment Effect Robust to Confounder Classical Measurement Error}
\author{Caleb H. Miles, Joel Schwartz, and Eric J. Tchetgen Tchetgen
\thanks{Caleb H. Miles is Postdoctoral Fellow, Department of Biostatistics, University of California, Berkeley 94720-7358. Joel Schwartz is Professor, Departments of Environmental Health and Epidemiology, Harvard T.H. Chan School of Public Health, Boston, MA 02115. Eric J. Tchetgen Tchetgen is Professor, Departments of Biostatistics and Epidemiology, Harvard T.H. Chan School of Public Health, Boston, MA 02115.}}
\date{}
\maketitle
\bigskip

\begin{abstract}
\noindent When assessing the presence of an exposure causal effect on a given outcome, it is well known that classical measurement error of the exposure can reduce the power of a test of the null hypothesis in question, although its type I error rate will generally remain at the nominal level. In contrast, classical measurement error of a confounder can inflate the type I error rate of a test of treatment effect. In this paper, we develop a large class of semiparametric test statistics of an exposure causal effect, which are completely robust to classical measurement error of a subset of confounders. A unique and appealing feature of our proposed methods is that they require no external information such as validation data or replicates of error-prone confounders. We present a doubly-robust form of this test that requires only one of two models to be correctly specified for the resulting test statistic to have correct type I error rate. We demonstrate validity and power within our class of test statistics through simulation studies. We apply the methods to a multi-U.S.-city, time-series data set to test for an effect of temperature on mortality while adjusting for atmospheric particulate matter with diameter of 2.5 micrometres or less (PM2.5), which is known to be measured with error.
\end{abstract}

\noindent%
{\it Keywords:}  Causal inference, Climate change, Double robustness, Environmental health, Measurement error, Semiparametric inference
\vfill

\newpage
\spacingset{1.45} % DON'T change the spacing!

\section{Introduction}\label{sec:intro}
%%%%%%%%%%%%%%%%%%%%%%%%%%%%%%%%%%%%%%%%%%%%%%%%%%%%%%%%%%%%%%%%%%%%%%%%%%%%%%%%%%%%%%%%
In observational studies across a number of disciplines, it is common to observe variables measured with error. As noted in \cite{cote1987estimating}, ``\cite{campbell1988methodology} has gone so far as to say that measurement error (both random error and method effect) and its confounding influences on research findings cannot be avoided.'' In the field of causal inference, data on covariates are needed to adjust for confounding in order to make inferences with causal interpretations. While a commonly-cited result states that the ordinary least squares (OLS) coefficient estimate of a single variable subject to classical measurement error in a multiple linear regression will merely be attenuated to the null, and hence produce a valid (albeit conservative) statistic of the null hypothesis of no association, the effects of confounders measured with error can be more harmful. Unaccounted for, mismeasured confounders will produce biased effect estimates and invalid hypothesis tests of a treatment effect in even the simplest of settings. Consider a confounder $X^*$ that is measured with classical, nondifferential measurement error $\varepsilon^*$ such that $X=X^*+\varepsilon^*$, where $X$ is the value that is actually observed. A multiple linear regression of outcome $Y$ on exposure $A$ and observed confounders $C$ and $X$ will produce a treatment effect estimate that is biased towards the crude (unadjusted) estimate. Consequently, hypothesis tests concerning the effect of $A$ based on this regression may be invalid in the sense that the corresponding type I error rate will often exceed the nominal level.

In this paper, we present a large class of test statistics of the null hypothesis of no conditional average causal effect that maintain validity when a set of continuous confounders are measured with classical, nondifferential measurement error. We consider three different semiparametric models, all of which specify that the conditional mean of the exposure is linear in the error-prone confounders (on the additive, multiplicative, or logit scale). Beyond this specification, our class of test statistics contains three corresponding subclasses consisting of test statistics with nominal type I error rate within (a) a model that specifies the conditional exposure mean as some possibly-nonlinear function of error-free confounders, (b) a model that specifies the conditional outcome mean as some possibly-nonlinear function of the error-free confounders, and (c) a model that specifies that at least one of (a) or (b) holds. Statistics in (c) are said to be doubly robust.

There has been a great deal of interest in methodology for confounders measured with error. \cite{ogburn2012nondifferential} and \cite{ogburn2013bias} consider settings in which misclassification of a discrete confounder results in imperfect confounding adjustment, in the sense that the average causal effect will be biased in the direction of the crude (unadjusted) estimate. We will focus on settings in which the error-prone confounders are continuous. When instrumental variables (IVs) for such confounders are available, traditional IV estimators can be used to adjust for measurement error in a linear outcome regression model \cite[among others]{amemiya1985instrumental, amemiya1990instrumental, amemiya1990two, buzas1996instrumental, carroll1994measurement, carroll2006measurement, fuller2009measurement, stefanski1995instrumental}. \cite{kuroki2014measurement} give an identification result for a total effect in a linear structural equation model with Gaussian errors when at least two proxies of an error-prone confounder are available. \cite{raykov2012propensity} proposes a propensity-score estimator for the average causal effect under a latent variable model for confounders, in which at least two error-prone congeneric indicators are measured for each latent confounding variable. \cite{mccaffrey2013inverse} propose an inverse probability of treatment weighted (IPTW) estimator with weights that are functions of the error-prone confounders that is consistent for the average causal effect when the distribution of the measurement error is homoscedastic and known or consistently estimated. In addition to IPTW estimators, \cite{lockwood2015matching} also consider matching estimators, establishing necessary and sufficient conditions for recovering unconfounded matching estimators based on functions of error-prone confounders. However, they show that these are less likely to hold in practice than conditions for IPTW, and do not provide general guidance in estimating appropriate functions to match on. \cite{lockwood2015simulation} propose a simulation-extrapolation estimator that assumes normality and known or consistently-estimated variance of the measurement error. \cite{cochran1973controlling} derive an analytical expression characterizing the bias incurred by continuous confounders measured with classical error in a linear regression model. Under simplifying assumptions, this bias can be corrected provided the reliability ratio is known. \cite{battistin2014treatment} generalize this work to nonparametric models, allowing for identification of the average causal effect of treatment and the effect of treatment on the treated in a sensitivity analysis on a range of possible values for the variance of measurement error. 

The latter method fits into a more general body of measurement error research that does not rely on external data. While most traditional measurement-error methods depend on auxiliary data such as instrumental variables or data from reliability or validation studies, attention has more recently shifted to developing methods not dependent on such data, which can be expensive to collect or simply unavailable. One such class of methods uses ``higher-order'' moment restrictions to produce identifying estimating equations for parameters of a regression model with covariates measured with error \citep[among others]{bonhomme2009consistent, cragg1997using, dagenais1997higher, erickson2000measurement, erickson2002two, kapteyn1983identification, lewbel1997constructing, lewbel2012using, pal1980consistent, schennach2013nonparametric, stuart1979advanced}. Another existing method, known as deconvolution, uses external knowledge of the measurement-error distribution to recover the density of the error-free variable \citep[among others]{fan1991optimal, fan1993nonparametric}. Though this idea is attractive in principle, it is very rare that the distribution of measurement error will be known, and convergence rates tend to be too slow for practical use. When identification conditions are not met, it is possible to compute bounds for the parameter of interest \citep{frisch1934statistical, klepper1984consistent, schennach2014entropic}. \cite{carroll2006measurement} and \cite{schennach2012measurement} survey measurement error literature in which they provide a thorough treatment of methods not requiring external data.

This paper contributes both to the literature on confounder measurement error as well as on measurement error methods not requiring external information. Our proposed class of test statistics is of interest in a variety of practical settings in that it requires neither knowledge of the distribution or variance of the measurement error (as in \cite{battistin2014treatment} or in deconvolution), nor any form of external information. We will assume that the measurement error is mean independent of the error-free confounders and the outcome, i.e. $E(\varepsilon^*\mid C,Y)=E(\varepsilon^*)$ over the joint support of $C$ and $Y$. Otherwise, no other moment restriction is required not already embedded in the assumptions needed to draw causal inferences. In particular, our statistics directly leverage the no-unobserved-confounding assumption needed for identification of the average causal effect, even in the absence of confounder measurement error.

The governing idea of the proposed approach is that under the null hypothesis of no effect of exposure, the assumption of no unobserved confounding renders the outcome an instrumental variable for the association between the true error-prone covariate $X^*$ and $A$ adjusting for $C$. Thus, as documented in the literature on IV methods for measurement error, $Y$ can be used to obtain a consistent estimator of the association between $(C,X^*)$ and $A$ \cite[among others]{amemiya1985instrumental, amemiya1990instrumental, amemiya1990two, buzas1996instrumental, carroll1994measurement, carroll2006measurement, fuller2009measurement, stefanski1995instrumental}. In this paper, we show that estimation of the conditional association between error-prone covariates and exposure can be accomplished jointly with a test of no treatment effect under a unifying framework of a generalized method of moments test based on overidentifying moment restrictions, known in the econometrics literature as a Sargan test \citep{sargan1958estimation}, Hansen test, or J-test \citep{hansen1982large}.

We demonstrate validity of our test statistics in the presence of measurement error in an extensive simulation study, and compare them with standard outcome-regression and g-estimation tests that do not allow for measurement error. In simulation settings, our tests retain validity while the competing tests break down in the presence of measurement error. We also apply our methods to an environmental health data set to test for a causal effect of same-day temperature on mortality in the United States. We conduct a multi-city analysis with daily information on mortality as well as environmental factors including temperature and concentration of particulate matter with diameter of 2.5 micrometres or less (PM2.5). PM2.5 is known to be a confounder and to be measured with error due to the high level of variability of pollution across monitoring stations \citep{armstrong1990effects, armstrong2004exposure, bateson2007panel, kioumourtzoglou2014exposure, zeger2000exposure}. Temperature is hypothesized to have a causal association with mortality, allowing us to examine our method to test this hypothesis while being robust to confounder measurement error.

%%%%%%%%%%%%%%%%%%%%%%%%%%%%%%%%%%%%%%%%%%%%%%%%%%%%%%%%%%%%%%%%%%%%%%%%%%%%%%%%%%%%%%%%
\section{A Class of Propensity-Score-Based Test Statistics Robust to Measurement Error}\label{sec:class}
%%%%%%%%%%%%%%%%%%%%%%%%%%%%%%%%%%%%%%%%%%%%%%%%%%%%%%%%%%%%%%%%%%%%%%%%%%%%%%%%%%%%%%%%

To formalize discussion, we define for each $a$ the counterfactual $Y_a$ to be a subject's outcome had the subject been assigned, possibly contrary to fact, to exposure level $a$. We link these counterfactuals to the observed variables via the consistency assumption \citep{robins1986new}, which states that if $A=a$, then $Y_a=Y$ with probability one for each level $a$. Suppose we observe a set of covariates $C$ that are measured without error as well as an additional set of covariates $X$ that are measured with classical error, i.e., additive measurement error. The latter are related to their corresponding, unobserved, true value $X^*$ by $X=X^*+\varepsilon^*$, where $\varepsilon^*$ is the measurement error, assumed to be mean independent of $C$ and $Y$. Further, suppose that given $X^*$ and $C$, there is no unmeasured confounding of the effect of $A$ on $Y$, which can be formalized as follows:
\begin{Assumption}\label{assumption}
$Y_a\ci A\mid C,X^*$ for each level $a$ (No unmeasured confounding).
\end{Assumption}
\noindent Assume that A is continuous; results are generalized to binary and count exposure in Section \ref{sec:exten}. We now present a class of test statistics for the null hypothesis
$H_0: E(Y_a\mid C,X^*)=E(Y_0\mid C,X^*)$ for all $a$.
Intuitively, under the stronger sharp null, $Y_a=Y$ w.p. 1 for all $a$, the assumption of no unmeasured confounding implies that $A\ci Y\mid C,X^*$. Furthermore since $X^*$ is a confounder, we have that $Y\nci X^*\mid C$. These two statements formally define $Y$ as an instrumental variable for the conditional association between $X^*$ and $A$ given $C$, and therefore can be used to account for measurement error in estimating a model of $A$ given $C$ and $X$ \citep{carroll2006measurement}. Although H$_0$ is technically weaker than the sharp null, as we will show the essential idea that $Y$ can nonetheless be used under the null to correct for measurement error in the exposure model remains true, despite $Y$ no longer formally being an IV.

We present our first result, which relies on correct specification of a mean regression model for exposure. We will refer to this as the propensity-score model \citep{rosenbaum1983central}. In this vein, consider the semiparametric model  $\mathcal{M}_A$ as the set of laws for $(A,X^*,C,Y)$ with sole restriction the parametric model $E(A\mid C,X^*;\alpha)=[1,X^{*T}]g_{A}(C;\alpha)$, where $g_{A}=[g_{A,1}(C;\alpha_1),\allowbreak g_{A,2}(C;\alpha_2)^T]^T$ is a known function of $C$ indexed by the unknown parameter $\alpha=[\alpha_1^T,\alpha_2^T]^T$, where $g_{A,1}$ is real valued and $g_{A,2}$ has the same dimension as $X^*$. We define $p_1$ and $p_2$ to be the dimensions of $\alpha_1$ and $\alpha_2$, respectively, such that $\alpha$ has dimension $p\equiv p_1+p_2$. We assume throughout that the conditional $X^*-A$ association is linear given $C$, however $g_{A}(C;\alpha)$, though parametric, can be a nonlinear function of $C$. The case $g_{A,2}(C;\alpha_2)=\alpha_2$ is a constant implies no $X^*-C$ interaction in the model for $A$. Define $\nabla_\alpha$ to be the gradient operator with respect to $\alpha$ and $\mathbb{P}_n$ to be the empirical mean operator.

\begin{Theorem}\label{thm:PS}
Let $\ell(C)$ and $m(C)$ be $p+q$-dimensional functions of $C$ for some positive integer $q$, such that the elements of $\ell(C)Y+m(C)$ are linearly independent. Define
$U(\alpha)\equiv\left\{\ell(C)Y+m(C)\right\}\{A-[1,X^T]g_{A}(C;\alpha)\}$, $\Omega\equiv E\left\{U(\alpha)U(\alpha)^T\right\}$ and $\hat{U}_n(\alpha)\equiv \mathbb{P}_n U_i(\alpha)$. If $U(\alpha)$ is continuously differentiable, $\nabla_\alpha E\{U(\alpha)\}=E\{\nabla_\alpha U(\alpha)\}$, and $\Omega^{-1} E\{\nabla_\alpha U(\alpha)\}$ has full rank, then for any $\hat{\Omega}_n\xrightarrow{p}\Omega$, the test statistic $\chi^2_{rps}\equiv\min\limits_{\alpha}n\hat{U}_n(\alpha)^T\hat{\Omega}_n^{-1}\hat{U}_n(\alpha)\xrightarrow{d} \chi^2_q$
under $\mathcal{M}_A$ and H$_0$.
\end{Theorem}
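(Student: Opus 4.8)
The plan is to separate the argument into a population identity and a standard large-sample step. The population identity is: writing $\alpha_0$ for the true value under $\mathcal{M}_A$, one has $E\{U(\alpha_0)\}=0$ under $\mathcal{M}_A$ and H$_0$. This is where all the causal structure, the null, and the classical measurement-error assumption are used. Given that identity, since $\ell(C)Y+m(C)$ is $(p+q)$-dimensional while $\alpha$ has dimension $p$, the vector $U(\alpha)$ carries $q$ overidentifying moment restrictions at $\alpha_0$, and the conclusion $\chi^2_{rps}\xrightarrow{d}\chi^2_q$ is precisely the limiting distribution of the Sargan/Hansen $J$-statistic under the regularity conditions stated, which I would invoke directly (citing \cite{sargan1958estimation,hansen1982large}).

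For the identity, first rewrite the residual. Put $R\equiv A-[1,X^{*T}]g_A(C;\alpha_0)=A-E(A\mid C,X^*)$, so $E(R\mid C,X^*)=0$ by construction; since $X=X^*+\varepsilon^*$, $A-[1,X^{T}]g_A(C;\alpha_0)=R-\varepsilon^{*T}g_{A,2}(C;\alpha_0)$, whence $E\{U(\alpha_0)\}=E[\{\ell(C)Y+m(C)\}R]-E[\{\ell(C)Y+m(C)\}\varepsilon^{*T}g_{A,2}(C;\alpha_0)]$. The second expectation vanishes by iterating over $(C,Y)$: both $\ell(C)Y+m(C)$ and $g_{A,2}(C;\alpha_0)$ are functions of $(C,Y)$ and $C$, and $E(\varepsilon^*\mid C,Y)=E(\varepsilon^*)$, which is $0$ for classical error; this is the sole point at which the measurement-error assumption enters, and it is what makes the procedure free of the error variance, replicates, or validation data. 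For the first expectation, the subtlety flagged in the text is that under the non-sharp H$_0$, $Y$ is \emph{not} an instrument---$A$ need not be mean-independent of $Y$ given $(C,X^*)$---so I would condition on $(C,X^*,A)$ rather than on $(C,X^*,Y)$. Consistency together with Assumption~\ref{assumption} and H$_0$ give $E(Y\mid C,X^*,A=a)=E(Y_a\mid C,X^*)=E(Y_0\mid C,X^*)$, which does not depend on $a$; hence $E\{\ell(C)Y+m(C)\mid C,X^*,A\}=\ell(C)E(Y_0\mid C,X^*)+m(C)$ is a function of $(C,X^*)$ alone, and since $R$ is a function of $(C,X^*,A)$, a further conditioning on $(C,X^*)$ reduces the first expectation to $E[\{\ell(C)E(Y_0\mid C,X^*)+m(C)\}E(R\mid C,X^*)]=0$. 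Thus $E\{U(\alpha_0)\}=0$.

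The large-sample step is the textbook overidentified-GMM computation. From $E\{U(\alpha_0)\}=0$ and the CLT, $\sqrt{n}\,\hat U_n(\alpha_0)\xrightarrow{d}N(0,\Omega)$; differentiability of $U$ with $\nabla_\alpha E\{U(\alpha)\}=E\{\nabla_\alpha U(\alpha)\}$ makes $G\equiv E\{\nabla_\alpha U(\alpha_0)\}$ well defined; linear independence of the elements of $\ell(C)Y+m(C)$ keeps $\Omega$ nonsingular; and full rank of $\Omega^{-1}G$ is the local-identification condition yielding a $\sqrt{n}$-consistent, asymptotically linear minimizer $\hat\alpha_n$ (consistency of $\hat\alpha_n$ and of the sample Jacobian coming from a uniform law of large numbers). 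Expanding $\hat U_n(\hat\alpha_n)$ to first order around $\alpha_0$ and using the first-order condition $\hat G_n^T\hat\Omega_n^{-1}\hat U_n(\hat\alpha_n)=0$ gives $\sqrt{n}\,\hat U_n(\hat\alpha_n)=\{I-G(G^T\Omega^{-1}G)^{-1}G^T\Omega^{-1}\}\sqrt{n}\,\hat U_n(\alpha_0)+o_p(1)$; substituting into $n\,\hat U_n(\hat\alpha_n)^T\hat\Omega_n^{-1}\hat U_n(\hat\alpha_n)$ and writing $W=\Omega^{-1/2}Z$ with $Z\sim N(0,\Omega)$ turns the limit into $W^T(I-P)W$, where $P$ is the orthogonal projection onto the column space of $\Omega^{-1/2}G$. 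Since $I-P$ is a symmetric idempotent of rank $(p+q)-p=q$, the limit is $\chi^2_q$.

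The main obstacle is the first expectation in the population identity, and within it the single point that H$_0$ is weaker than the sharp null: one must avoid claiming $A\ci Y\mid (C,X^*)$ and instead use only the mean-independence of $Y$ from $A$ given $(C,X^*)$ that H$_0$ actually supplies, which is why the conditioning is arranged on $(C,X^*,A)$. Everything else---the cancellation of the $\varepsilon^*$ term and the GMM limit theory---is routine given the stated hypotheses.
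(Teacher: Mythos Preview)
Your proposal is correct and follows essentially the same approach as the paper: you decompose the residual $A-[1,X^T]g_A(C;\alpha_0)$ into $A-E(A\mid C,X^*)$ minus $g_{A,2}(C;\alpha_0)^T\varepsilon^*$, dispatch the $\varepsilon^*$ term via mean independence of $\varepsilon^*$ from $(C,Y)$, and kill the remaining term by conditioning on $(C,X^*,A)$ and invoking $E(Y\mid A,C,X^*)=E(Y\mid C,X^*)$, which is exactly the paper's computation. The only difference is that you spell out the Sargan/Hansen limit argument in full (expansion, projection, idempotent rank $q$), whereas the paper simply cites the overidentified-GMM result; your added care about why one conditions on $(C,X^*,A)$ rather than $(C,X^*,Y)$ under the non-sharp null is a nice touch but not a departure in method.
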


Thus, we have a valid test of no causal effect of treatment which depends on $X^*$ only through the mismeasured covariate, $X$. Intuitively, standard normal equations for the propensity-score model incur bias due to components that include the product of the residual $A-E(A\mid C,X)$ with the error-prone covariate $X$. However, this can be amended by replacing one of these components with the product of the residual with $Y$. Under H$_0$, this product will form an unbiased estimating equation. Additional unbiased estimating functions can be added simply by multiplying this latter product with any function of $C$, and hence these can be used to form a valid Sargan test statistic. Thus, a simple form of the test in Theorem \ref{thm:PS} with $q=1$ could use $\nabla_{\alpha}g_A(C;\alpha)$ augmented by $Y$ and the product of $Y$ with an element of $C$ in place of $\ell(C)Y+m(C)$, for instance. In order to ensure linear independence of the elements of $\ell(C)Y+m(C)$, the interaction function $g_{A,2}(C,\alpha_2)$ in the propensity-score model cannot be saturated in $C$.

The following iterative procedure can be used to compute the variance-estimate component $\hat{\Omega}_n$:
\begin{algorithm2e}
initialize $\tilde{\alpha}:=\argmin\limits_\alpha\hat{U}_n(\alpha)^T\hat{U}_n(\alpha)$\;
set $\hat{\Omega}_n:=\mathbb{P}_n\left\{U(\tilde{\alpha})U(\tilde{\alpha})^T\right\}$\;
\Do{convergence not reached}{
	set $\tilde{\alpha}:=\argmin\limits_\alpha \hat{U}_n(\alpha)^T\hat{\Omega}_n^{-1}\hat{U}_n(\alpha)$\;
	set $\hat{\Omega}_n:=\mathbb{P}_n\left\{U(\tilde{\alpha})U(\tilde{\alpha})^T\right\}$\;
}
\end{algorithm2e}

The first two steps are in fact sufficient for asymptotic validity, however iterating generally improves finite-sample performance. Alternatively, a continuous updating approach can be used, in which $\hat{\Omega}_n$ is indexed by $\alpha$, and $n\hat{U}_n(\alpha)^T\hat{\Omega}_n(\alpha)^{-1}\hat{U}_n(\alpha)$ is minimized in $\alpha$ through both $\hat{U}_n(\alpha)$ and $\hat{\Omega}_n(\alpha)$.

%%%%%%%%%%%%%%%%%%%%%%%%%%%%%%%%%%%%%%%%%%%%%%%%%%%%%%%%%%%%%%%%%%%%%%%%%%%%%%%%%%%%%%%%
\section{A Class of Doubly-Robust Test Statistics}\label{sec:dr}
%%%%%%%%%%%%%%%%%%%%%%%%%%%%%%%%%%%%%%%%%%%%%%%%%%%%%%%%%%%%%%%%%%%%%%%%%%%%%%%%%%%%%%%%

Validity of the test statistic given in the previous section relies on correct specification of $E(A\mid X^*,C)=[1,X^*]^Tg_A(C;\alpha)$, however this model may be misspecified. Therefore it is of interest to explore an alternative, potentially more robust approach. Here we present a large class of doubly-robust test statistics. In order to describe this class, let $E(Y\mid C;\gamma)=g_Y(C;\gamma)$ denote a parametric model for $E(Y\mid C)$, and consider the semiparametric model $\mathcal{M}_Y$ with sole restrictions 
$E(A\mid C,X^*)-E(A\mid C,X^*=0)=X^{*T}g_{A,2}(C;\alpha_2)$ and
$E(Y\mid C)=g_Y(C;\gamma)$. This is a semiparametric model since the association between $C$ and $A$ given $X^*=0$ is unrestricted. Further consider the union model $\mathcal{M}_\cup\equiv\mathcal{M}_A \cup \mathcal{M}_Y$. We present a class of test statistics for each of these two models, adopting the notation $\Delta_A(\alpha)\equiv A-[1,X^{T}]g_{A}(C;\alpha)$ and $\Delta_Y(\gamma)\equiv Y-g_Y(C;\gamma)$ for the residuals in each model.

\begin{Theorem}\label{thm:OR}
Let
\[U(\alpha_2,\gamma)\equiv\left[\begin{array}{c}
k(C)\Delta_Y(\gamma)\{A-X^{T}g_{A,2}(C;\alpha_2)\}\\
S(\gamma)
\end{array}\right],\]
where $S(\gamma)$ is a system of estimating equations for $\gamma$, and $k(C)$ is a vector-valued function of $C$ with linearly-independent elements with dimension $p_2+q$ for some positive integer $q$. Suppose $U(\alpha_2,\gamma)$ is continuously differentiable, $\nabla_{\alpha_2,\gamma} E\{U(\alpha_2,\gamma)\}=E\{\nabla_{\alpha_2,\gamma} U(\alpha_2,\gamma)\}$, and $\Omega^{-1}\allowbreak E\{\nabla_{\alpha_2,\gamma} U(\alpha_2,\gamma)\}$ has full rank, where $\Omega=\allowbreak E\{\allowbreak U(\allowbreak\alpha_2,\allowbreak\gamma)\allowbreak U(\allowbreak \alpha_2,\allowbreak \gamma)^T\}$. Then under $\mathcal{M}_Y$ and H$_0$, $\chi^2_{ror}\equiv\min\limits_{\alpha_2,\gamma}n\hat{U}_n(\alpha_2,\allowbreak\gamma)^T\allowbreak\hat{\Omega}_n^{-1}\hat{U}_n(\alpha_2,\gamma)\xrightarrow{d} \chi^2_q$ for any $\hat{\Omega}_n\xrightarrow{p}\Omega$.
\end{Theorem}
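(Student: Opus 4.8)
The plan is to recognize $\chi^2_{ror}$ as the Sargan--Hansen overidentifying-restrictions statistic for the moment system $U(\alpha_2,\gamma)$ and to reduce the claim to the classical GMM limit theory, after first checking its one substantive hypothesis: that $E\{U(\alpha_2,\gamma)\}$ vanishes at some point of the parameter space under $\mathcal{M}_Y$ and H$_0$. Write $\theta=(\alpha_2^T,\gamma^T)^T$. Let $\gamma_0$ be the value with $g_Y(C;\gamma_0)=E(Y\mid C)$, so that $E\{\Delta_Y(\gamma_0)\mid C\}=0$ and, assuming as the dimension count requires that $\dim S=\dim\gamma$ and that $S$ is a valid estimating system for $\gamma$ within $\mathcal{M}_Y$, $E\{S(\gamma_0)\}=0$. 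Let $\alpha_2^0$ index the $\mathcal{M}_Y$ restriction $E(A\mid C,X^*)-E(A\mid C,X^*=0)=X^{*T}g_{A,2}(C;\alpha_2^0)$, and write $h(C)\equiv E(A\mid C,X^*=0)$, which the model leaves unrestricted. The goal of the first step is to show $E[k(C)\Delta_Y(\gamma_0)\{A-X^Tg_{A,2}(C;\alpha_2^0)\}]=0$.

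For that step I would decompose, using $X=X^*+\varepsilon^*$,
\[A-X^Tg_{A,2}(C;\alpha_2^0)=\{A-X^{*T}g_{A,2}(C;\alpha_2^0)\}-\varepsilon^{*T}g_{A,2}(C;\alpha_2^0),\]
and treat the two pieces by different conditioning. For the $X^*$ piece, first note that H$_0$ together with Assumption~\ref{assumption} and consistency forces $E(Y\mid A,C,X^*)=E(Y\mid C,X^*)$, so that $E\{\Delta_Y(\gamma_0)\mid A,C,X^*\}=E(Y\mid C,X^*)-E(Y\mid C)$ does not depend on $A$; conditioning on $(C,X^*)$ and using that $A-X^{*T}g_{A,2}(C;\alpha_2^0)$ is a function of $(A,C,X^*)$ whose conditional mean given $(C,X^*)$ is $h(C)$, this piece collapses to $E\big[k(C)\{E(Y\mid C,X^*)-E(Y\mid C)\}h(C)\big]$, which a further conditioning on $C$ annihilates since $E\{E(Y\mid C,X^*)-E(Y\mid C)\mid C\}=0$. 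For the $\varepsilon^*$ piece, condition on $(C,Y)$ and invoke the assumed $E(\varepsilon^*\mid C,Y)=E(\varepsilon^*)$, then condition on $C$ and use $E\{\Delta_Y(\gamma_0)\mid C\}=0$; this piece vanishes for every $\alpha_2$, consistent with $\alpha_2$ being identified only through the $X^*$ piece and with the assumed full rank of $\Omega^{-1}E\{\nabla_{\alpha_2,\gamma}U\}$. Stacking with $E\{S(\gamma_0)\}=0$ gives $E\{U(\alpha_2^0,\gamma_0)\}=0$.

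The remainder is the routine overidentification argument. With $\theta_0=((\alpha_2^0)^T,\gamma_0^T)^T$ a zero of $E\{U\}$, and under the stated regularity ($U$ continuously differentiable, interchange of $\nabla_\theta$ and expectation, $\Omega$ invertible, $G\equiv E\{\nabla_\theta U(\theta_0)\}$ with $\Omega^{-1}G$ of full column rank, $\hat\Omega_n\xrightarrow{p}\Omega$, so that the minimizer $\hat\theta\xrightarrow{p}\theta_0$ is $\sqrt n$-consistent), a first-order expansion of $\hat U_n$ about $\theta_0$ combined with the law of large numbers and the central limit theorem $\sqrt n\,\hat U_n(\theta_0)\xrightarrow{d}N(0,\Omega)$ gives
\[\sqrt n\,\hat U_n(\hat\theta)=\{I-G(G^T\Omega^{-1}G)^{-1}G^T\Omega^{-1}\}\sqrt n\,\hat U_n(\theta_0)+o_p(1).\]
Substituting into $n\hat U_n(\hat\theta)^T\hat\Omega_n^{-1}\hat U_n(\hat\theta)$ and replacing $\hat\Omega_n$ by its limit writes the statistic as a quadratic form in a standard normal vector with idempotent matrix $\Omega^{-1/2}\{I-G(G^T\Omega^{-1}G)^{-1}G^T\Omega^{-1}\}\Omega^{1/2}$ of rank $\dim(U)-\dim(\theta)=(p_2+q+\dim\gamma)-(p_2+\dim\gamma)=q$, yielding the $\chi^2_q$ limit; this part can simply be cited from \citet{hansen1982large} and \citet{sargan1958estimation}.

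I expect the first step to be the main obstacle, specifically keeping straight which conditioning set sends each term to zero. The unmodeled intercept $h(C)=E(A\mid C,X^*=0)$ and the measurement-error term are killed only after conditioning on $C$, via $E\{\Delta_Y(\gamma_0)\mid C\}=0$, which is exactly the defining feature of $\mathcal{M}_Y$; whereas the mean independence $E(Y\mid A,C,X^*)=E(Y\mid C,X^*)$ --- the precise place where H$_0$ (together with Assumption~\ref{assumption}) enters, and the analogue of $Y$ acting as an instrument in Theorem~\ref{thm:PS} --- must be invoked one level earlier, at $(C,X^*)$. Once the moment-zero identity is in place, the passage to the $\chi^2_q$ limit is the standard overidentification calculation and carries no further difficulty.
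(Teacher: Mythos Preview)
Your proof is correct and follows essentially the same route as the paper. Your two-piece decomposition $\{A-X^{*T}g_{A,2}\}-\varepsilon^{*T}g_{A,2}$ is the same as the paper's three-piece decomposition $\{A-E(A\mid C,X^*)\}+E(A\mid C,X^*=0)-g_{A,2}^T\varepsilon^*$ with the first two terms consolidated (under $\mathcal{M}_Y$ your first piece has conditional mean $h(C)$, which is exactly those two terms), and the conditioning moves you make --- invoking $E(Y\mid A,C,X^*)=E(Y\mid C,X^*)$ at the $(A,C,X^*)$ level, then collapsing via $E\{\Delta_Y(\gamma_0)\mid C\}=0$, and handling $\varepsilon^*$ via mean independence given $(C,Y)$ --- are the same ingredients in the same order; your GMM projection argument is a more explicit rendering of what the paper dispatches by citing the Sargan--Hansen theory.
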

We also have the result:
\begin{Theorem}\label{thm:DR}
Let
\[U(\alpha,\gamma)\equiv\left[\begin{array}{c}
k(C)\Delta_Y(\gamma)\Delta_A(\alpha)\\
\left\{\ell(C)Y+m(C)\right\}\Delta_A(\alpha)\\
S(\gamma)
\end{array}\right],\]
where $S(\gamma)$ is a system of estimating equations for $\gamma$ that is unbiased when $g_Y(C;\gamma)$ is correctly specified, $k(C)$, $\ell(C)$, and $m(C)$ are each vector-valued functions of $C$ such that $k(C)$ and $\ell(C)Y+m(C)$ each consist of linearly-independent elements, and $\ell$ and $m$ have dimension $p_1$ and $k$ has dimension $p_2+q$ for some positive integer $q$. Suppose $U(\alpha,\gamma)$ is continuously differentiable, $\nabla_{\alpha,\gamma} E\{U(\alpha,\gamma)\}=E\{\nabla_{\alpha,\gamma} U(\alpha,\gamma)\}$, and $\Omega^{-1} E\{\nabla_{\alpha,\gamma} U(\alpha,\gamma)\}$ has full rank, where $\Omega=E\left\{U(\alpha,\gamma)U(\alpha,\gamma)^T\right\}$. Then under $\mathcal{M}_\cup$ and H$_0$, $\chi^2_{dr}\equiv\min\limits_{\alpha,\gamma}n\hat{U}_n(\alpha,\gamma)^T\hat{\Omega}_n^{-1}\hat{U}_n(\alpha,\gamma)\xrightarrow{d} \chi^2_q$, for any $\hat{\Omega}_n\xrightarrow{p}\Omega$.
\end{Theorem}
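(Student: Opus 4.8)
The plan is to read $\chi^2_{dr}$ as a Hansen--Sargan test of overidentifying restrictions \citep{sargan1958estimation,hansen1982large} for the stacked moment function $U(\alpha,\gamma)$ and parameter $\theta=(\alpha^T,\gamma^T)^T$, exactly as in Theorems \ref{thm:PS} and \ref{thm:OR}. Under the stated regularity conditions --- $U$ continuously differentiable, $\nabla_{\alpha,\gamma}$ and expectation interchangeable, $\Omega$ positive definite, $\Omega^{-1}E\{\nabla_{\alpha,\gamma}U\}$ of full column rank --- and any $\hat\Omega_n\xrightarrow{p}\Omega$, standard GMM theory yields $\min_{\alpha,\gamma}n\hat U_n(\alpha,\gamma)^T\hat\Omega_n^{-1}\hat U_n(\alpha,\gamma)\xrightarrow{d}\chi^2_{d_U-d_\theta}$ \emph{provided} there is an interior point $(\alpha_0,\gamma_0)$ with $E\{U(\alpha_0,\gamma_0)\}=0$. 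Since $U$ has blocks of dimension $p_2+q$, $p_1$, and $\dim(\gamma)$, we have $d_U=p+q+\dim(\gamma)$ and $d_\theta=p+\dim(\gamma)$, so $d_U-d_\theta=q$. The whole argument thus reduces to producing such a zero under $\mathcal{M}_\cup$ and H$_0$.

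First I would isolate the two distributional facts that do the work. By consistency and Assumption \ref{assumption}, $E(Y\mid A=a,C,X^*)=E(Y_a\mid C,X^*)$, which under H$_0$ equals $E(Y_0\mid C,X^*)=:\mu_Y(C,X^*)$ for every $a$; hence $E(Y\mid A,C,X^*)$ does not depend on $A$, and this mean-independence is what plays the role of ``$Y$ is an instrument'' once H$_0$ is not the sharp null. The measurement error enters only through $E(\varepsilon^*\mid C,Y)=E(\varepsilon^*)=0$ (taking the mean to be zero without loss, by absorbing a constant into $X^*$), which annihilates any term of the form $E[\,(\text{function of }C,Y)\cdot\varepsilon^{*T}g_{A,2}(C;\alpha_2)\,]$ after conditioning on $(C,Y)$. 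Now if the law lies in $\mathcal{M}_A$, take $\alpha_0$ the true propensity-score parameter and $\gamma_0$ a population root of $E\{S(\gamma)\}=0$: the $S(\gamma_0)$-block vanishes by construction; the $\{\ell(C)Y+m(C)\}\Delta_A(\alpha_0)$-block vanishes by the argument underlying Theorem \ref{thm:PS}; and for the $k(C)\Delta_Y(\gamma_0)\Delta_A(\alpha_0)$-block, writing $\Delta_A(\alpha_0)=\{A-E(A\mid C,X^*)\}-\varepsilon^{*T}g_{A,2}(C;\alpha_{2,0})$ and conditioning successively on $(A,C,X^*)$ --- which replaces $Y$ by $\mu_Y(C,X^*)$ in $\Delta_Y(\gamma_0)$ --- and on $(C,X^*)$ kills the first piece, while the $\varepsilon^*$ piece dies as above; note this uses nothing about $g_Y$ being correct.

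The harder case is $\mathcal{M}_Y$: take $\alpha_2$ the true slope parameter and $\gamma_0$ the true outcome-model parameter, so the $S$-block vanishes, and decompose $\Delta_A(\alpha)=\{A-E(A\mid C,X^*)\}+h(C;\alpha_1)-\varepsilon^{*T}g_{A,2}(C;\alpha_2)$, where $h(C;\alpha_1)\equiv E(A\mid C,X^*=0)-g_{A,1}(C;\alpha_1)$ is the residual intercept misspecification, a function of $C$ alone. In the $k(C)\Delta_Y(\gamma_0)\Delta_A(\alpha)$-block the $\{A-E(A\mid C,X^*)\}$ and $\varepsilon^*$ pieces vanish just as in the $\mathcal{M}_A$ case, and the $h(C;\alpha_1)$ piece vanishes because it is a function of $C$ while $E(\Delta_Y(\gamma_0)\mid C)=E(Y\mid C)-g_Y(C;\gamma_0)=0$; crucially this holds for \emph{every} $\alpha_1$, leaving exactly $p_1$ free coordinates. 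Then one chooses $\alpha_{1,0}$ to solve the $p_1$-dimensional equation $E[\{\ell(C)Y+m(C)\}h(C;\alpha_{1,0})]=0$ --- the only surviving term of the $\{\ell(C)Y+m(C)\}\Delta_A$-block, since its H$_0$- and $\varepsilon^*$-pieces vanish as before --- and such a root exists (it is a linear system when $g_{A,1}$ is linear in $\alpha_1$) and equals the probability limit of $\hat\alpha_1$ under the full-rank hypothesis. In either submodel we thus have $E\{U(\alpha_0,\gamma_0)\}=0$ at an interior point, so the overidentification theory applies under $\mathcal{M}_\cup=\mathcal{M}_A\cup\mathcal{M}_Y$ and gives $\chi^2_{dr}\xrightarrow{d}\chi^2_q$. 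I expect the main obstacle to be exactly the $\mathcal{M}_Y$ step: separating the propensity-score \emph{intercept} error as a pure function of $C$ so that the correctly specified outcome model absorbs it, dually verifying that the first-block $\alpha_1$-moments are solvable (so $\hat\alpha_1$ has a well-defined, identified limit --- where the full-rank condition does its work), and checking that the non-sharp H$_0$ genuinely delivers the mean-independence $E(Y\mid A,C,X^*)=\mu_Y(C,X^*)$ at every point where $Y$ is conditioned away; the remaining bookkeeping of which conditioning set kills which term is routine.
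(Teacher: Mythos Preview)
Your proposal is correct and follows essentially the same route as the paper: under $\mathcal{M}_A$ you take the true $\alpha$ and any root $\tilde\gamma$ of $E\{S(\gamma)\}=0$, under $\mathcal{M}_Y$ you take the true $\gamma$ and $\alpha_2$ and solve the $p_1$ remaining equations for $\alpha_1$, in each case using $E(Y\mid A,C,X^*)=E(Y\mid C,X^*)$ and the mean independence of $\varepsilon^*$ to kill the relevant pieces of $\Delta_A$, then invoke the Sargan--Hansen overidentification limit with $d_U-d_\theta=q$. If anything, you are slightly more explicit than the paper in isolating $h(C;\alpha_1)=E(A\mid C,X^*{=}0)-g_{A,1}(C;\alpha_1)$ and arguing why the $k$-block vanishes for \emph{every} $\alpha_1$ under $\mathcal{M}_Y$, and in tying the existence of $\tilde\alpha_1$ to the full-rank hypothesis; the paper simply asserts that such a $\tilde\alpha_1$ exists.
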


As before, an appropriate variance estimator $\hat{\Omega}_n$ can be computed using either an iterated procedure or a continuous-updating approach. An alternative approach would be to first estimate $\gamma$ by solving $\mathbb{P}_n S(\gamma)=0$, plug this value into $U(\alpha,\gamma)$ (rendering the $\gamma$-estimating-equation component zero), and use
\begin{align*}
\hat{\Omega}_n = \frac{1}{n}\sum\limits_{i=1}^n&\left[U_i(\hat{\alpha},\hat{\gamma})-\left\{\sum\limits_{j=1}^n\nabla_\gamma U_j(\hat{\alpha},\gamma)\mid_{\hat{\gamma}}\right\}\left\{\sum\limits_{j=1}^n\nabla_\gamma S_j(\hat{\gamma})\right\}^{-1}S_i(\hat{\gamma})\right]^{\otimes 2}
\end{align*}
for the variance estimator in the denominator of the test statistic. All estimates of $\hat{\Omega}_n$ discussed here require that $k$, $\ell$, and $m$ have not been estimated. Power for both these and the previous tests can be optimized by using appropriate choices of the functions $\ell(C)$, $m(C)$, and $k(C)$ based on the direction of the alternative hypothesis, which we discuss further in Section \ref{sec:power}.

%%%%%%%%%%%%%%%%%%%%%%%%%%%%%%%%%%%%%%%%%%%%%%%%%%%%%%%%%%%%%%%%%%%%%%%%%%%%%%%%%%%%%%%%
\section{A Simulation Study Demonstrating Validity}\label{sec:sim_dr}
%%%%%%%%%%%%%%%%%%%%%%%%%%%%%%%%%%%%%%%%%%%%%%%%%%%%%%%%%%%%%%%%%%%%%%%%%%%%%%%%%%%%%%%%

We now present results from a simulation study drawing samples from the following data generating mechanism. We generate $(Y_0,C)$ under a joint normal model given by $Y_0 = N(0,1)$ and $C = Y_0+N(0,1)$, and $X^*$ and $A$ under $X^* = Y_0+C+Y_0C+N(0,1)$ and $A = C+X^*+N(0,4)$. To reflect the null hypothesis, we let $Y = Y_0$. We generate $X$ from the classical measurement error model $X = X^*+N\{0,9(1/\tau-1)\}$, where $\tau$ is the reliability ratio, i.e., the ratio of the variability of the true variable $X^*$ to the variable measured with error $X$. One may easily verify that Assumption \ref{assumption} and H$_0$ are satisfied.

We drew 100,000 samples of size 5000 under four settings with reliability ratios of 50\%, 70\%, 90\%, and 100\% (i.e., no measurement error). In each setting, we applied the three testing procedures given in Sections \ref{sec:class} and \ref{sec:dr}. We compared these tests with two others that ignored the presence of measurement error. The first was an outcome-regression-based test, using the p-value of the regression coefficient for $A$ when regressing $Y$ on $C$, $X$, and $A$ using OLS and using a sandwich variance estimate. Though the outcome model is not correctly specified, the OLS estimate of the coefficient for $A$ in the absence of measurement error will be unbiased for the slope of $A$ in $E(Y\mid A,C,X^*)$ (zero). This is because it is equal to the OLS estimate of the coefficient for the residual obtained from a linear regression of $A$ on $C$ and $X^*$, which is correctly specified. The second comparison test was based on g-estimation \citep{robins1989analysis}, using the p-value of the regression coefficient for $Y$ when regressing $A$ on $C$, $X$, and $Y$. All tests used an $\alpha$ level of 0.05. 

All tests with the exception of the standard outcome-regression test were conducted under three different models: the intersection model $\mathcal{M}_{\cap}$, in which both $g_Y(C;\gamma)$ and $g_A(C;\alpha)$ were correctly specified; $\mathcal{M}_Y$, in which $g_Y(C;\gamma)$ was correctly specified and $g_{A,1}(C;\alpha_1)$ was not; and $\mathcal{M}_A$, in which $g_A(C;\alpha)$ was correctly specified and $g_Y(C;\gamma)$ was not. We used $g_Y(C; \gamma) = \gamma_0+\gamma_1C$ and $g_{A,1}(C; \alpha_1) = [1,C]\alpha_1$ for correctly-specified models and $g_Y(C; \gamma) = \gamma_0+\gamma_1C^2$ and $g_{A,1}(C; \alpha_1) = [1,C^2]\alpha_1$ for incorrectly-specified models. Though the conditional mean of $Y$ given $C$ and $A$ does not have a simple form, the standard outcome-regression test does not require it to be modeled correctly for validity. Therefore, we show results for the standard outcome-regression test using a misspecified model in all cases for the purposes of comparison. The index functions for the doubly-robust test used an orthonormalization of $[1,C,C^2,C^3]^T$, with $k(C)$ and $m(C)$ being equal to the first two rows and $\ell(C)$ being equal to the last two. For the robust propensity-score test, we used a Gram-Schmidt orthonormalization of $[1,C,C^2,C^3]^T$ for the function $\ell(C)$, $m(C)=0$, and $q=1$. For the robust outcome-regression test, we used $k(C)=[1,C]^T$. The score equations for $\gamma$ in the doubly-robust and robust outcome-regression tests were $S(\gamma)=[1,C^2]^T(Y-\gamma_0-\gamma_1C^2)$ under $\mathcal{M}_A$, and $S(\gamma)=[1,C]^T(Y-\gamma_0-\gamma_1C)$ otherwise. Results are presented in Table \ref{tab:sim}.

\begin{table}
\centering
\caption{Estimated type 1 error from 100,000 hypothesis tests simulated under the null hypothesis\label{tab:sim}}
\begin{tabular}{c c r@{.}l r@{.}l r@{.}l r@{.}l r@{.}l}
\\
\hline\hline
Model & Rel. ratio (\%) &\multicolumn{2}{c}{DR}&\multicolumn{2}{c}{Robust PS}&\multicolumn{2}{c}{Robust OR}&\multicolumn{2}{c}{G-estimation}&\multicolumn{2}{c}{Standard OR} \\
\hline
$\mathcal{M}_{\cap}$	& 50	& 0&0455	& 0&0472	& 0&0485 & \multicolumn{2}{l}{1}	& \multicolumn{2}{l}{1} \\
&	70	& 0&0453	& 0&0482	& 0&0497 & \multicolumn{2}{l}{1}	& 1&000 \\
&	90	& 0&0533	& 0&0484	& 0&0505 & 0&645	& 0&643 \\
&	100	& 0&0476	& 0&0489	& 0&0517 & 0&0496	& 0&0493 \\
$\mathcal{M}_Y$	& 50	& 0&0452	& 0&714 & 0&0485	& \multicolumn{2}{l}{1}	& \multicolumn{2}{l}{1} \\
&	70	& 0&0486	& 0&858	& 0&0497 & \multicolumn{2}{l}{1}	& 1&000 \\
&	90	& 0&0527	& 0&939	& 0&0505 & \multicolumn{2}{l}{1}	& 0&643 \\
&	100	& 0&0519	& 0&958	& 0&0517 & 0&965	& 0&0493 \\
$\mathcal{M}_A$	& 50	& 0&0463	& 0&0472	& \multicolumn{2}{l}{1} & \multicolumn{2}{l}{1}	& \multicolumn{2}{l}{1} \\
&	70	& 0&0495	& 0&0482 & \multicolumn{2}{l}{1}	& \multicolumn{2}{l}{1}	& 1&000 \\
&	90	& 0&0472	& 0&0484	& \multicolumn{2}{l}{1} & 0&645	& 0&643 \\
&	100	& 0&0497	& 0&0489	& \multicolumn{2}{l}{1} & 0&0496	& 0&0493 \\
\hline
\\
\end{tabular}
\end{table}

As expected, the doubly-robust test was approximately valid with correct Monte Carlo type 1 error rate under all settings. The robust propensity-score test and robust outcome regression test, on the other hand, were approximately valid under all settings apart from under $\mathcal{M}_Y$ and $\mathcal{M}_A$, respectively. G-estimation and standard outcome-regression tests were not valid in the presence of measurement error, and the standard outcome-regression test was approximately valid in its absence. The g-estimation test was approximately valid under no measurement error only when $g_A$ was correctly specified.

%%%%%%%%%%%%%%%%%%%%%%%%%%%%%%%%%%%%%%%%%%%%%%%%%%%%%%%%%%%%%%%%%%%%%%%%%%%%%%%%%%%%%%%%
\section{Application to Test for an Effect of Temperature on Mortality}\label{sec:data}
%%%%%%%%%%%%%%%%%%%%%%%%%%%%%%%%%%%%%%%%%%%%%%%%%%%%%%%%%%%%%%%%%%%%%%%%%%%%%%%%%%%%%%%%

As evidence for climate change continues to accumulate, the natural question of whether temperature affects mortality is of increasing importance. While there are many long-term threats posed by rising global temperatures, the immediate effects on mortality also pose a grave public-health concern. When studying this effect, it is vital to control for air pollution as a potential confounder \citep{o2003modifiers}. A common metric of air pollution is PM2.5 concentration, however this is well known to be measured with error \citep{armstrong1990effects, armstrong2004exposure, bateson2007panel, kioumourtzoglou2014exposure, zeger2000exposure}. In particular, PM2.5 is considered to be contaminated with a mixture of both Berkson error, due to the variability of concentration actually experienced across individuals, and classical error, due to aggregation of measurements across multiple monitoring stations \citep{kioumourtzoglou2014exposure, zeger2000exposure}. The former is benign in the sense that it increases variance but introduces no bias; it is the latter with which we are most concerned. Some studies try to reduce measurement error by using spatial smoothing models \citep{hoek2002association, jerrett2005spatial, puett2009chronic, sampson2011pragmatic, szpiro2010predicting, yanosky2008predicting}, however these rely on geographical data on residency and may induce other forms of error \citep{gryparis2009measurement, sheppard2012confounding, szpiro2011efficient}. We implemented our method, and compared it against two methods that ignore the presence of measurement error.

The data set used here consists of time-series mortality data from forty-one U.S. cities measured over the course of 1999 to 2006, though in our analysis, we only considered twenty-four cities with at least eight deaths per day, as cities with lower mortality rates were unlikely to provide enough power to detect an effect. Data on individual mortality with exact date of death was acquired from the National Center for Health Statistics (NCHS) and from state public health departments \citep{zanobetti2009effect}. We excluded accidental deaths (ICD-code 10th revision: V01-Y98, ICD-code 9th revision: 1-799) and deaths of individuals who did not reside in the city in which they died. Temperature data were obtained from the National Oceanic and Atmospheric Administration (NOAA) website, with a city being assigned ambient temperature readings from its nearest monitoring station. PM2.5 data were obtained from the US Environmental Protection Agency's (EPA) Air Quality System (AQS) database (US EPA 2013). PM2.5 readings were averaged over all monitors in a city whenever multiple readings were available.

On a given day, $i$, let $Y_i$ denote the number of deaths, $A_i$ denote the average temperature in degrees Celsius, $X_i$ denote the average PM2.5 concentration measurement, and $C_i$ consist of date, $t_i$, and dummy variables for day of week. The functions $g_{A,1}(C;\alpha_1)$ and $g_Y(C;\gamma)$ in the propensity-score and outcome-regression models used both Fourier bases for time with a period of one year to account for seasonal trends as well as polynomial bases for time to account for secular trends. The dimensions of the Fourier bases were at least four (not including intercept) and the dimensions of the polynomial bases started at zero. Sargan goodness-of-fit tests were used to assess model fit, and more dimensions were added to the bases until the tests no longer rejected at an $\alpha$ level of 0.10. In particular, we used forms of the test in Theorem \ref{thm:PS} with $\ell(C)=0$ (eliminating its power to test for an effect on $Y$) and $m(C)$ equal to $\nabla_\alpha g_A(C;\alpha)$ augmented by the next two polynomial or Fourier basis functions. Analogous tests were used for the robust outcome-regression model, with the moment functions being equal to the regression residuals multiplied by $\nabla_\gamma g_Y(C;\gamma)$ augmented by the next two polynomial or Fourier basis functions. Without this step, test rejections could be attributable to model misspecification rather than the presence of a true effect. Both models were linear in these terms as well as the day-of-week dummy variables, and the propensity-score model did not include interaction between $C$ and $X$. The outcome-regression model used a log link.

Measurement-error-robust tests from each of the three classes presented in Theorems \ref{thm:PS}-\ref{thm:DR} were conducted based on these models. For the robust propensity-score test, we used $[1, t, 0_{p-1}^T]^T$ as the function $\ell(C)$, where $0_{p-1}$ is a vector of zeroes with length $p-1$, and $[0,0,\nabla_{\alpha_1}^T g_{A,1}(C;\alpha_1)]^T$ as the function $m(C)$. For the robust outcome-regression test, we used $[1,t^r]^T$ as the function $k(C)$, where $r$ is the smallest order of polynomial not included in $g_{A,1}(C;\alpha_1)$. For the doubly-robust test, we used $[1,t]^T$ as the function $k(C)$, $\nabla_{\alpha_1} g_{A,1}(C;\alpha_1)$ as $m(C)$, and a vector of zeroes with length $p_1$ as the function $\ell(C)$. Thus, in each case $q=1$, and we compared resulting test statistics with the corresponding null distribution, $\chi^2_1$. We used the doubly-robust test for inference, and supplemented our analysis with the other two for an additional check of model fit. As previously mentioned, no model of the relationship between temperature and mortality is needed. This is particularly advantageous in our setting, since this relationship tends to be V or J shaped, and hence not as simple to model.

The two standard statistics considered in the simulation study -- based on the g-estimation and standard outcome-regression tests -- were implemented. These standard tests do require models for the relationship between temperature and mortality. Consequently, for the standard outcome-regression test, we used a quasi-Poisson model for the outcome-regression model with the deterministic component consisting of the same $g_Y(C;\gamma)$ described above for the robust test statistics, plus terms for linear-spline basis functions of temperature. We placed a single knot for this spline at 16 $\degree$C, which is around where vertices of this nonmonotone relationship tend to be. Due to the non-invertibility of this relationship, we conducted two standard g-estimation tests: one based on data from days with mean temperature no higher than 16 $\degree$C, and the other from days with mean temperature no lower than 16 $\degree$C. For both, we used OLS estimation of the regression of temperature on the same $g_A(C;\alpha)$ described above for the robust test statistics plus a linear term for the outcome. Sargan tests were implemented using the gmm package in R. To account for the serially-correlated nature of our data, we used heteroskedasticity and autocorrelation consistent variance estimators from the sandwich package in R for all tests. Results are presented in Table \ref{tab:data}.

\begin{table}
\centering
\caption{P-values of hypothesis tests for an effect of temperature on mortality in U.S. cities\label{tab:data}}
\begin{tabular}{l r@{.}l r@{.}l r@{.}l r@{.}l r@{.}l r@{.}l}
\\
\hline\hline
 & \multicolumn{2}{c}{Robust} & \multicolumn{2}{c}{Robust} & \multicolumn{2}{c}{Doubly} & \multicolumn{2}{c}{G-est.} & \multicolumn{2}{c}{G-est.} & \multicolumn{2}{c}{Standard} \\
City & \multicolumn{2}{c}{PS} & \multicolumn{2}{c}{OR} & \multicolumn{2}{c}{Robust} & \multicolumn{2}{c}{$\leq$16 \degree C} & \multicolumn{2}{c}{$\geq$16 \degree C} & \multicolumn{2}{c}{OR} \\
\hline
Albuquerque, NM			&	0&18		&	0&22		&	0&20		&	0&33	&	0&79		&	0&56 \\
Allentown, PA				&	0&016		&	0&15		&	0&92		&	0&88	&	0&58		&	0&25 \\
Annandale, VA				&	0&52		&	0&070		&	0&84		&	0&91	&	0&39		&	0&98 \\
Baltimore, MD 				&	0&94 		& 0&29 		&	0&82 		&	0&30	&	0&20		&	0&0018 \\
Boston, MA 					&	0&015 		& 0&083 		&	0&50		&	0&47	&	0&055		&	0&00073 \\
Elizabeth, NJ					& 0&27 		& 0&14 		&	0&61		&	0&69	&	0&091		&	0&052 \\
Hartford, CT 				&	0&19 		& 0&20 		&	0&13		&	0&36	&	0&22		&	0&033 \\
Lancaster, PA 				&	0&17 		& 0&067 		&	0&16		&	0&21	&	0&16		&	0&54 \\
Melville, NY 					&	0&65 		& 0&47 		&	0&25		&	0&94	&	0&045		&	0&0504 \\
Middlesex, NJ 				&	0&48 		& 0&32 		&	0&18 		&	0&91	&	0&44		&	0&38 \\
New Haven, CT 			&	0&65 		& 0&067 		&	0&79 		&	0&73	&	0&016		&	0&0047 \\
New York, NY 				&	0&013 		& 6&1e-4 	&	0&0018	&	0&86	&	0&0010	&	6&7e-10 \\
Newark, NJ 					&	0&58 		& 0&0085 	&	0&48 		&	0&18	&	0&36		&	0&039 \\
Paterson, NJ 				&	0&66 		& 0&13 		&	0&33		&	0&47	&	0&11		&	0&64 \\
Philadelphia, PA 			&	0&0083	& 0&014		&	0&20		&	0&29	&	0&092		&	7&9e-9 \\
Reading, PA 					&	0&64 		& 0&019 		&	0&69		&	0&45	&	0&79		&	0&49 \\
Richmond, VA 				&	0&25 		& 0&89 		&	0&34		&	0&19	&	0&91		&	0&60 \\
Salt Lake City, UT			&	0&56 		& 8&8e-4 	&	0&77		&	0&014	&	0&10		&	0&0033 \\
Spokane, WA 				&	0&55 		& 0&95 		&	0&75		&	0&082	&	0&011		&	0&011 \\
Stamford, CT 				&	0&060 		& 0&37 		&	0&034		&	0&95	&	0&23		&	0&54 \\
Upper Marlboro, MD	&	0&11 		& 0&046 		&	0&84		&	0&91	&	0&10		&	0&51 \\
Washington, DC 			&	0&042 		& 0&013	 	&	0&80		&	0&20	&	0&97		&	0&0023 \\
Wilmington, DE 			&	0&64 		& 0&13 		&	0&87		&	0&59	&	0&51		&	0&66 \\
York, PA 						&	0&27 		& 0&17 		&	0&28		&	0&35	&	0&92		&	0&68 \\
\hline
\\
\end{tabular}
\end{table}

The doubly-robust test rejected the null hypothesis of no effect of temperature on mortality in two cities: New York, NY and Stamford, CT. For New York, the other measurement-error-robust tests also rejected, whereas these tests did not reject for Stamford. In the latter case, this suggests that the other measurement-error-robust tests were underpowered relative to the doubly-robust test. For New York, there was no indication of substantial attenuation due to measurement error, as all standard tests rejected, apart from the g-estimation test on colder days. For Stamford, on the other hand, it does appear that measurement error may have masked an effect from the standard tests, as all standard tests failed to reject.

The robust outcome-regression test rejected for Newark, Reading, Salt Lake City, and Upper Marlboro, while neither the doubly-robust test nor the robust propensity-score test did, suggesting that either the outcome-regression model may not be correctly specified for these cities, or that the other measurement-error-robust tests were relatively underpowered. Similarly, the robust propensity-score test rejected for Allentown and Boston, while neither the doubly-robust nor the robust outcome-regression test did, suggesting that either the propensity-score model may not be correctly specified, or that the other measurement-error-robust tests were relatively underpowered. Both the robust propensity-score and robust outcome-regression tests rejected for Philadelphia and Washington, DC, while the doubly-robust test did not. This may reflect the fact that the doubly-robust test had less power to reject than the other tests in these cities. In a number of cities, one or more of the standard tests rejected when the robust tests did not. Because of possible bias induced by measurement error, we cannot discern whether these rejections are indicative of true effects or merely artifacts.

%%%%%%%%%%%%%%%%%%%%%%%%%%%%%%%%%%%%%%%%%%%%%%%%%%%%%%%%%%%%%%%%%%%%%%%%%%%%%%%%%%%%%%%%
\section{Power and Estimation Under an Additive Causal Model}\label{sec:power}
%%%%%%%%%%%%%%%%%%%%%%%%%%%%%%%%%%%%%%%%%%%%%%%%%%%%%%%%%%%%%%%%%%%%%%%%%%%%%%%%%%%%%%%%

We now consider estimation and testing under the alternative hypothesis of an additive causal model,
\begin{align}\label{eq:causalmodel}
E(Y_0\mid A,C,X^*) = E(Y-\psi_0 A\mid A,C,X^*),
\end{align}
where $\psi_0$ is the average causal effect for a unit change in $A$ such that $\psi_0 A=E(Y_A-Y_0\mid C,X^*)$, and $\psi_0$ is the causal parameter of interest. The following discussion can be easily adapted to other structural mean models, however.

We conducted a supplementary simulation study varying the value of $\psi_0$ to demonstrate the local power of our proposed test statistics, using the same data generating mechanism as in Section \ref{sec:sim_dr} in each of the same measurement error settings, but with $Y=Y_0+\psi_0 A$ to encode the alternative hypothesis. The robust propensity-score test was conducted on 1000 samples of size 5000 for each value of $\psi_0$. Results are presented in Figure \ref{fig:5000}.
\begin{figure}
\centering
\includegraphics[scale=.75]{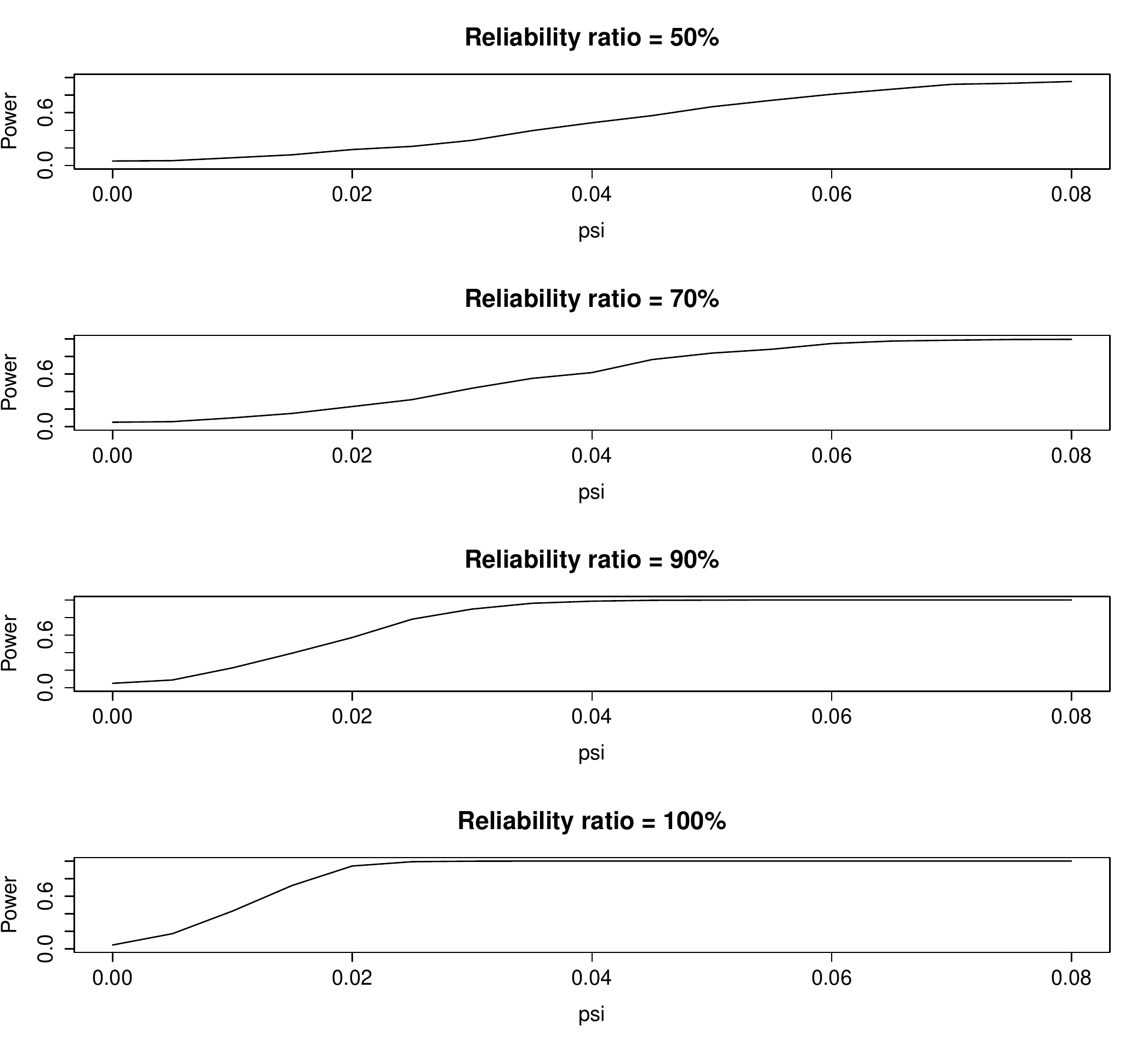}
\caption{Simulation results demonstrating power for n=5000.\label{fig:5000}}
\end{figure}
As expected, we observed trends of increasing power with effect size and reliability ratio. Also as expected, power was approximately 0.05 for $\psi_0=0$ in all cases. The test achieved an estimated 80\% power at $\psi_0=0.06$ when $\tau=0.5$, at $\psi_0=0.05$ when $\tau=0.7$, at $\psi_0=0.025$ when $\tau=0.9$, and at $\psi_0=0.02$ when $\tau=1$. Power appeared to be tending towards unity as $\psi$ increased in all cases. Similar trends were observed in studies with sample sizes of 1000 and 10,000.

Having posited a model for the effect of $A$ on $Y$, our testing approach can be extended for effect estimation. Let $H(\psi) \equiv Y-\psi A$, and define $H \equiv Y-\psi_0 A$ so that $H=H(\psi_0)$. Then $E(H\mid A,C,X^*) = E(Y_0\mid A,C,X^*)$. Our claim is now that $H$ (instead of $Y$, since H$_0$ is no longer assumed) behaves like an instrumental variable for the $X^*$ -- $A$ association, controlling for $C$. To see this, first note that by randomization of $A$ within $\{C,X^*\}$, we have $E(H\mid A,C,X^*)=E(Y_0\mid A,C,X^*)=E(Y_0\mid C,X^*)$, hence $E(H\mid A,C,X^*)=E(H\mid C,X^*)$. Secondly, $\varepsilon^*$ is mean independent of $H$ by assumption. Finally, since $X^*$ is a confounder of the $A$--$Y$ association, $X^*$ must be correlated with $Y_0$, and hence $H$, by definition.

Replacing $Y$ with $H(\psi)$ in the equations given in Theorems \ref{thm:PS}-\ref{thm:DR} when $q=\dim(\psi)$ (one, under model \ref{eq:causalmodel}) produces a system of estimating equations for $\psi$, $\alpha$, and (in the doubly-robust and outcome-regression cases) $\gamma$. Unbiasedness of these functions follows analogously to unbiasedness of the moment equations shown in the proofs for Theorems \ref{thm:PS}-\ref{thm:DR} in the Appendix. Thus, unknown parameters can be estimated by solving $\mathbb{P}_nU(\psi,\alpha)=0$, $\mathbb{P}_nU(\psi,\alpha_2,\gamma)=0$, or $\mathbb{P}_nU(\psi,\alpha,\gamma)=0$. Under certain regularity conditions, the resulting estimator will be consistent and asymptotically normal, and the estimator produced by solving the doubly-robust estimating equations will have these properties provided at least one of $g_A(C;\alpha)$ or $g_Y(C;\gamma)$ is specified correctly. However, even in simple linear models, the ``profile estimating equations'' in $\psi$ can be highly nonlinear. By profile estimating equations, we mean the equations obtained by solving $\dim(\alpha)+\dim(\gamma)$ of the estimating equations for $\alpha$ and $\gamma$ implicitly in terms of $\psi$, and plugging these into the additional estimating equations. Identifiability issues can be mitigated by using overidentified estimating equations, i.e., $q>\dim(\psi)$, but solving these may remain challenging, and can result in very unstable estimation.

Optimal choices of functions $\ell(C)$ and $m(C)$ for the robust propensity-score estimator are given in the Appendix. These functions also optimize power when used for hypothesis testing as in Section \ref{sec:class}. We note, however, that these functions depend on several additional unknown models, and may not necessarily provide efficiency gain if one or more of these additional models is misspecified, even if $g_A$ and $g_Y$ are correct. The additional variability introduced by these parameters must be accounted for in finding a suitable variance estimator $\hat{\Omega}_n$ for both testing and estimation. This can be accomplished by stacking into $U(\psi,\alpha)$ the score or estimating equations used to estimate the nuisance parameters that estimates of the functions $\ell$ and $m$ depend on.

%%%%%%%%%%%%%%%%%%%%%%%%%%%%%%%%%%%%%%%%%%%%%%%%%%%%%%%%%%%%%%%%%%%%%%%%%%%%%%%%%%%%%%%%
\section{Extensions to Binary and Count Exposures}\label{sec:exten}
%%%%%%%%%%%%%%%%%%%%%%%%%%%%%%%%%%%%%%%%%%%%%%%%%%%%%%%%%%%%%%%%%%%%%%%%%%%%%%%%%%%%%%%%

Under stronger conditions, the test statistics described in this paper can be extended to binary- and count-exposure settings. We will now assume there is no interaction between $C$ and $X^*$ in the propensity-score model and that the measurement error is independent of $X^*$, $C$, $A$, and $Y$. Assuming the propensity-score model
\begin{align}\label{binarymodel}
\text{logit Pr}(A=1\mid C,X^*)=g_A(C;\alpha_1)+\alpha_2^T X^*,
\end{align}
for binary $A$ or
\begin{align}\label{countmodel1}
\log E(A\mid C,X^*)=g_A(C;\alpha_1)+\alpha_2^T X^*
\end{align}
for count $A$ is correctly specified, we have the following analogous result.

\begin{Theorem}\label{thm:discrete}
Let $\ell(C)$ and $m(C)$ each be vector-valued functions of $C$ with linearly-independent elements and dimension $p+q$, and let
$U(\alpha)\equiv\left\{\ell(C)Y+m(C)\right\}\exp(-\alpha_2^TXA)[A-\mathrm{expit}\{\allowbreak g_A(\allowbreak C;\allowbreak\alpha_1)\}]$
for (\ref{binarymodel}) if $A$ is binary or
$U(\alpha)\equiv\left\{\ell(C)Y+m(C)\right\}[A-\exp\{g_A(C;\alpha_1)+\alpha_2^TX\}]$
for (\ref{countmodel1}) if $A$ is a count. Suppose $U(\alpha)$ is continuously differentiable, $\nabla_\alpha E\{U(\alpha)\}=E\{\nabla_{\alpha} U(\alpha)\}$, and $\Omega^{-1} E\{\nabla_{\alpha} U(\alpha)\}$ has full rank, where $\Omega=E\left\{U(\alpha)U(\alpha)^T\right\}$. Under H$_0$, the test statistic $\chi^2_{rps}\equiv\min\limits_{\alpha}n\hat{U}_n(\alpha)^T\hat{\Omega}_n^{-1}\hat{U}_n(\alpha)\xrightarrow{d} \chi^2_q$, for any $\hat{\Omega}_n\xrightarrow{p}\Omega$.
\end{Theorem}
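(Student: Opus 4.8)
The plan is to reduce the statement to the same generalized-method-of-moments overidentification argument already used in the proof of Theorem~\ref{thm:PS}. Concretely, it suffices to exhibit a value $\alpha^{\dagger}$ in the interior of the parameter space at which the population moment condition holds, $E\{U(\alpha^{\dagger})\}=0$; once that is in hand, the conclusion $\chi^{2}_{rps}\xrightarrow{d}\chi^{2}_{q}$ is precisely the Sargan/Hansen $J$-test limit for a GMM criterion with efficient weighting, and it uses only the hypotheses already listed---continuous differentiability of $U$, the interchange $\nabla_{\alpha}E\{U(\alpha)\}=E\{\nabla_{\alpha}U(\alpha)\}$, local identification through full rank of $\Omega^{-1}E\{\nabla_{\alpha}U(\alpha)\}$ at $\alpha^{\dagger}$, and $\hat{\Omega}_{n}\xrightarrow{p}\Omega$---with degrees of freedom $\dim U-\dim\alpha=(p+q)-p=q$. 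So the only genuinely new step is verifying the moment condition, and this is exactly where the stronger assumptions imposed at the start of Section~\ref{sec:exten}---$\varepsilon^{*}$ \emph{independent} of $(X^{*},C,A,Y)$ rather than merely mean-independent, no $C$--$X^{*}$ interaction in the propensity-score model, and (implicitly) finiteness of the relevant moment generating function of $\varepsilon^{*}$---are needed.

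For the count model, I would write $X=X^{*}+\varepsilon^{*}$ and factor $\exp\{g_{A}(C;\alpha_{1})+\alpha_{2}^{T}X\}=\exp\{g_{A}(C;\alpha_{1})+\alpha_{2}^{T}X^{*}\}\exp(\alpha_{2}^{T}\varepsilon^{*})$; since $\varepsilon^{*}\ci(X^{*},C,A,Y)$, integrating out $\varepsilon^{*}$ merely multiplies this term by the constant $\kappa(\alpha_{2})\equiv E\{\exp(\alpha_{2}^{T}\varepsilon^{*})\}$. Next I would condition on $(A,C,X^{*})$ and use that, under H$_0$, consistency, and Assumption~\ref{assumption}, $E(Y\mid A,C,X^{*})=E(Y_{0}\mid C,X^{*})$ does not depend on $A$; then condition again on $(C,X^{*})$ and substitute $E(A\mid C,X^{*})=\exp\{g_{A}(C;\alpha_{1,0})+\alpha_{2,0}^{T}X^{*}\}$ from model~(\ref{countmodel1}) at the true value $(\alpha_{1,0},\alpha_{2,0})$. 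This yields $E\{U(\alpha)\}=E\big[\{\ell(C)E(Y_{0}\mid C,X^{*})+m(C)\}\big(\exp\{g_{A}(C;\alpha_{1,0})+\alpha_{2,0}^{T}X^{*}\}-\kappa(\alpha_{2})\exp\{g_{A}(C;\alpha_{1})+\alpha_{2}^{T}X^{*}\}\big)\big]$, which vanishes upon taking $\alpha_{2}^{\dagger}=\alpha_{2,0}$ and $\alpha_{1}^{\dagger}$ chosen so that $g_{A}(C;\alpha_{1}^{\dagger})=g_{A}(C;\alpha_{1,0})-\log\kappa(\alpha_{2,0})$---available because $g_{A,1}$ carries an intercept. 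Hence $E\{U(\alpha^{\dagger})\}=0$ with $\alpha^{\dagger}=(\alpha_{1}^{\dagger},\alpha_{2,0})$.

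For the binary model, the factor $\exp(-\alpha_{2}^{T}XA)$ is the conditional-score-type correction familiar from logistic regression with a mismeasured covariate, and I would exploit it the same way. Because $A\in\{0,1\}$ and $\varepsilon^{*}\ci(X^{*},C,A,Y)$, integrating out $\varepsilon^{*}$ replaces $\exp(-\alpha_{2}^{T}XA)$ by $\kappa(\alpha_{2})^{A}\exp(-\alpha_{2}^{T}X^{*}A)$ with $\kappa(\alpha_{2})\equiv E\{\exp(-\alpha_{2}^{T}\varepsilon^{*})\}$ (constancy of the $X^{*}$-coefficient is what keeps this reduction from depending on $C$). After substituting $E(Y\mid A,C,X^{*})=E(Y_{0}\mid C,X^{*})$ via H$_0$ and Assumption~\ref{assumption} and conditioning on $(C,X^{*})$, I would be left with the two-point expectation $\sum_{a\in\{0,1\}}\kappa(\alpha_{2})^{a}\exp(-a\,\alpha_{2}^{T}X^{*})\big(a-\expit\{g_{A}(C;\alpha_{1})\}\big)\Pr(A=a\mid C,X^{*})$, where $\Pr(A=1\mid C,X^{*})=\expit\{g_{A}(C;\alpha_{1,0})+\alpha_{2,0}^{T}X^{*}\}$ by model~(\ref{binarymodel}). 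Clearing the common denominator $1+\exp\{g_{A}(C;\alpha_{1,0})+\alpha_{2,0}^{T}X^{*}\}$ and using $\expit\{g_{A}(C;\alpha_{1})\}/(1-\expit\{g_{A}(C;\alpha_{1})\})=\exp\{g_{A}(C;\alpha_{1})\}$ shows this expectation is proportional to $\kappa(\alpha_{2})\exp\{g_{A}(C;\alpha_{1,0})+(\alpha_{2,0}-\alpha_{2})^{T}X^{*}\}-\exp\{g_{A}(C;\alpha_{1})\}$, which vanishes identically when $\alpha_{2}^{\dagger}=\alpha_{2,0}$ (killing the $X^{*}$-dependence) and $g_{A}(C;\alpha_{1}^{\dagger})=g_{A}(C;\alpha_{1,0})+\log\kappa(\alpha_{2,0})$. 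Thus $E\{U(\alpha^{\dagger})\}=0$, and the GMM step inherited from Theorem~\ref{thm:PS} completes the proof.

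I expect the main obstacle to be exactly this last collapse in the binary case: one must carefully track how the integration over $\varepsilon^{*}$ interacts with the $Y$-factor sitting in front of the estimating function, and confirm that the logistic exponential-tilting identity still produces an \emph{exact} cancellation once the measurement error has been absorbed into an intercept shift and $Y$ (rather than $X^{*}$ or a genuine residual) multiplies $U$. The count case is comparatively straightforward, as there $\varepsilon^{*}$ integrates out to a single global constant instead of an $A$-dependent one. I would make a point of emphasizing in the write-up that all three of the extra assumptions of Section~\ref{sec:exten} are consumed precisely at this step: independence (beyond mean-independence) lets $\varepsilon^{*}$ be integrated out cleanly, finiteness of $\kappa(\alpha_{2})$ is needed for that constant to be well defined, and ruling out $C$--$X^{*}$ interaction guarantees the induced bias is a constant absorbable into the intercept of $g_{A,1}$ rather than an unknown function of $C$.
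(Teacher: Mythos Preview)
Your proposal is correct and, for the count case, follows the paper's argument essentially verbatim: factor $\exp(\alpha_2^T X)$, use independence of $\varepsilon^*$ to pull out the constant $E\{\exp(\alpha_2^T\varepsilon^*)\}$, invoke $E(Y\mid A,C,X^*)=E(Y\mid C,X^*)$ under H$_0$, and absorb the constant into the intercept.

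For the binary case the two arguments reach the same endpoint but are organized differently. The paper invokes the odds-ratio factorization
\[
f(A,X^*\mid C)=\frac{f(X^*\mid A=0,C)\exp(\bar{\alpha}_2^TX^*A)f(A\mid X^*=0,C)}{t(C)},
\]
so that the density's cross term $\exp(\bar{\alpha}_2^TX^*A)$ cancels exactly against the estimating function's $\exp(-\bar{\alpha}_2^TX^*A)$; this decouples the $X^*$-integral from the $A$-sum, and the remaining sum over $a\in\{0,1\}$ is taken against $f(a\mid X^*=0,C)$ and shown to vanish at $\alpha_0^*=\bar{\alpha}_0+K$. You instead compute the two-point expectation directly against $\Pr(A=a\mid C,X^*)=\expit\{g_A(C;\alpha_{1,0})+\alpha_{2,0}^TX^*\}$ and algebraically reduce it to the factor $\kappa(\alpha_2)\exp\{g_A(C;\alpha_{1,0})+(\alpha_{2,0}-\alpha_2)^TX^*\}-\exp\{g_A(C;\alpha_1)\}$, which vanishes at the same shifted parameter. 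Your route is more elementary and self-contained; the paper's factorization makes the link to the Stefanski--Carroll conditional-score construction more explicit and shows structurally \emph{why} the particular correction $\exp(-\alpha_2^TXA)$ is the right one. Either argument is complete once combined with the standard Sargan/$J$-test limit you cite.
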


The robust outcome-regression and doubly-robust tests in Section \ref{sec:dr} can also be extended to a count-exposure setting under (\ref{countmodel1}). As shown in the Appendix, if $\Delta_A(\alpha)$ is redefined as $A\exp\left\{-\alpha_2X\right\}-\exp\left\{g_A(C;\alpha_1)\right\}$, Theorem 3 holds as stated, and Theorem 2 holds if $U(\alpha,\gamma)$ is replaced by 
\[\left[\begin{array}{c}
k(C)\Delta_Y(\gamma)A\exp\left\{-\alpha_2X\right\}\\
S(\gamma)
\end{array}\right].\]
Unfortunately, we have no such extensions for (\ref{binarymodel}). Upon specifying a structural conditional-mean model for the causal effect of $A$ on $Y$, the moment functions for each of these tests can be easily adapted to form estimating equations for the average causal effect as was shown for the continuous-exposure case.

%%%%%%%%%%%%%%%%%%%%%%%%%%%%%%%%%%%%%%%%%%%%%%%%%%%%%%%%%%%%%%%%%%%%%%%%%%%%%%%%%%%%%%%%
\section{Discussion}\label{sec:discuss}
%%%%%%%%%%%%%%%%%%%%%%%%%%%%%%%%%%%%%%%%%%%%%%%%%%%%%%%%%%%%%%%%%%%%%%%%%%%%%%%%%%%%%%%%

We have developed a large class of statistics for the null hypothesis of no causal effect accounting for confounder classical measurement error. This work contributes to the literature on measurement error not only in causal inference, but also in the absence of external information by leveraging causal assumptions to produce a function of the observed data that behaves as an instrumental variable. The tests presented here do not require a causal model to be specified; they only require specification of a conditional mean model of the exposure, outcome, or both. The doubly-robust test only requires one of these models to be correctly specified. The only assumption required beyond those inherent to the causal inference framework (e.g., no unobserved confounding) is that the conditional mean of exposure is linear in the error-prone confounders, and that the part of the propensity-score model multiplying $X^*$ is not saturated in $C$. The latter condition can in fact be relaxed if H$_0$ is replaced by the sharp null, in which case $Y$ is formally an IV. The functions in the estimating equations involving $Y$ can then be nonlinear in $Y$, and hence the number of linearly-independent elements is no longer restricted by the number of possible covariate patterns in $C$.

Sargan tests behave as goodness-of-fit tests, such that when the appropriate models are correctly specified, the tests presented here are powered to detect whether H$_0$ fits the data. However, the tests are also powered to detect model misspecification, so even in the case where there is no causal effect, the tests may reject in case of model misspecification. Thus, when our tests reject, it is prudent to supplement them with a Sargan goodness-of-fit test for each model used as described in Section \ref{sec:data} in order to ensure the results are not due to poor model fit. Unfortunately, these tests are not useful for detecting nonlinearity in $X^*$.

As this method relies on $Y_0$ behaving as an IV for $X^*$, its performance naturally depends on the strength of this conditional association given $C$. Thus, when this association is weak, $Y_0$ will be a weak instrument, and our methods will have reduced power. However, in this case $X^*$ will be a weak confounder which may not need to be accounted for in any case, and hence there is a bias-variance trade-off to be considered when deciding whether to use this method. An empirical test of the conditional association between $Y$ and $X$ given $C$ can be used as a guideline, however we leave a formal treatment of this issue and development of a hybrid method with an unadjusted (for $X^*$) analysis as a potential avenue for future work.

In the multicity application, we tested for an effect of temperature on mortality while accounting for confounding by an error-prone measurement of PM2.5, and discovered evidence of an effect in New York, NY and Stamford, CT. While results from standard tests agreed with our findings in New York, test results in Stamford disagreed, suggesting these standard tests were biased toward the null in this case. In several other cities, the standard tests showed evidence of an effect, while our measurement-error-robust tests did not. This suggested a possible bias in the standard tests due to measurement error resulting in false positives, and that our method may have protected us against making such an error.

The work presented here is not without limitations. Though the tests presented are robust to measurement error of a subset of confounders, at least one true confounder must be measured correctly. While we have managed to avoid the use of parametric models, the assumption of linearity in the error-contaminated confounders could be unrealistic in certain settings. In our data application, no goodness-of-fit test rejected at an $\alpha$ level of 0.10 after adding sufficiently many basis functions, however we cannot be certain that the goodness-of-fit tests of the final propensity-score models were powered to detect nonlinearities in the error-contaminated confounders. Finally, we did not test for lagged effects of temperature, which could contribute to the effect of temperature on mortality.

One direction for future work would be to re-analyze these data with nearby cities with similar climates clustered into regions, as in \cite{schwartz2015projections}. This is sensible since many of the cities examined are quite close, and differences in test results are likely due to differences in power rather than effect size. This approach would greatly improve power to detect effects of temperature in entire regions. As it stands, the present analysis serves as a practical illustration of the application and interpretation of our method.

\newpage

\appendix

\section*{APPENDIX}

\begin{proof}[Proof of Theorem 1]Let $\bar{\alpha}$ be the true value of $\alpha$. Under $\mathcal{M}_A$,

\begin{align*}
E\{U(\bar{\alpha})\} = &E[\{\ell(C)Y+m(C)\}\{A - E(A\mid C,X^*) - g_{A,2}(C;\bar{\alpha}_2)^T\varepsilon^*\}]\\
 = &E[\{\ell(C)Y+m(C)\}A] - E[E\{\ell(C)Y+m(C)\mid C,X^*\}E(A\mid C,X^*)] \\
& - E[\{\ell(C)Y+m(C)\}g_{A,2}(C;\bar{\alpha}_2)^T]E(\varepsilon^*)\\
 = &E[\{\ell(C)Y+m(C)\}A] - E\left\{E\left(E[\{\ell(C)Y+m(C)\}A\mid A,C,X^*]\mid C,X^*\right)\right\}\\
 = &0,
\end{align*}
since $\varepsilon^*$ is mean independent of $Y$ and $C$, and $E(Y\mid A,C,X^*)=E(Y\mid C,X^*)$, which is implied by H$_0$ and Assumption 1. The regularity conditions on $U(\alpha)$ are sufficient to ensure that $\bar{\alpha}$ is a local minimum of $n\hat{U}_n(\alpha)^T\hat{\Omega}_n^{-1}\hat{U}_n(\alpha)$, and hence $\alpha$ is locally identified under H$_0$. Then because $\mathrm{dim}\{U(\alpha)\}=\mathrm{dim}(\alpha)+q$, $E\{U(\alpha)\}=0$ is an overidentified moment restriction, and the statistic $\chi^2_{rps}$ has a limiting distribution of $\chi^2_q$.
\end{proof}
\begin{proof}[Proof of Theorem 2] Let $\bar{\gamma}$ and $\bar{\alpha}_2$ be the true values of $\gamma$ and $\alpha_2$, respectively. Under $\mathcal{M}_Y$, $E\{S(\bar{\gamma})\}=0$ and
\begin{align*}
&E\left[k(C)\left\{Y-g_Y(C;\bar{\gamma})\right\}\left\{A-g_{A,2}(C;\bar{\alpha}_2)^TX\right\}\right]\\
=&E\left[k(C)\left\{Y-E\left(Y\mid C\right)\right\}\left\{A-E(A\mid C,X^*)+E(A\mid C,X^*=0)-g_{A,2}(C;\bar{\alpha}_2)^T\varepsilon^*\right\}\right]\\
=&E\left[k(C)\left\{Y-E\left(Y\mid C\right)\right\}A\right]-E\left[k(C)E\left\{Y-E\left(Y\mid C\right)\mid C,X^*\right\}E(A\mid C,X^*)\right]\\
&+E\left[k(C)E\left\{Y-E(Y\mid C)\mid C\right\}\left\{E(A\mid C,X^*=0)-g_{A,2}(C;\bar{\alpha}_2)^TE(\varepsilon^*)\right\}\right]\\
=&E\left[k(C)\left\{Y-E\left(Y\mid C\right)\right\}A\right]-E\left\{k(C)E\left(E[\{Y-E\left(Y\mid C\right)\}A\mid A,C,X^*]\mid C,X^*\right)\right\}\\
=&0,
\end{align*}
since $\varepsilon^*$ is mean independent of $Y$ and $C$, and $E(Y\mid A,C,X^*)=E(Y\mid C,X^*)$, which is implied by H$_0$ and Assumption 1. The regularity conditions on $U(\alpha_2,\gamma)$ are sufficient to ensure that $(\bar{\alpha}_2,\bar{\gamma})$ is a local minimum of $n\hat{U}_n(\alpha_2,\gamma)^T\hat{\Omega}_n^{-1}\hat{U}_n(\alpha_2,\gamma)$, and hence $(\alpha_2,\gamma)$ is locally identified under H$_0$. Then because $\mathrm{dim}\{U(\alpha_2,\gamma)\}=\mathrm{dim}(\alpha_2)+\mathrm{dim}(\gamma)+q$, $E\{U(\alpha_2,\gamma)\}=0$ is an overidentified moment restriction, and the statistic $\chi^2_{ror}$ has a limiting distribution of $\chi^2_q$.
\end{proof}
\begin{proof}[Proof of Theorem 3]Let $\bar{\gamma}$ be the true value of $\gamma$ under $\mathcal{M}_Y$ and $\bar{\alpha}$ be the true value of $\alpha$ under $\mathcal{M}_A$. Under $\mathcal{M}_A$, there exists some $\tilde{\gamma}$ such that $E\{S(\tilde{\gamma})\}=0$. That
\[E\left[\begin{array}{c}
k(C)\Delta_Y(\tilde{\gamma})\Delta_A(\bar{\alpha})\\
\left\{\ell(C)Y+m(C)\right\}\Delta_A(\bar{\alpha})
\end{array}\right]=0\]
follows from the unbiasedness of $U(\alpha)$ shown in the proof of Theorem 1.

Under $\mathcal{M}_Y$, $E\{S(\bar{\gamma})\}=0$ and for any $\alpha_1$,
\begin{align*}
&E\left[k(C)\left\{Y-g_Y(C;\bar{\gamma})\right\}\left\{A-g_{A,1}(C;\alpha_1)-g_{A,2}(C;\bar{\alpha}_2)^TX\right\}\right]\\
=&E\left[k(C)\left\{Y-E\left(Y\mid C\right)\right\}\left\{A-E(A\mid C,X^*)+E(A\mid C,X^*=0)-g_{A,1}(C;\alpha_1)-g_{A,2}(C;\bar{\alpha}_2)^T\varepsilon^*\right\}\right]\\
=&E\left[k(C)\left\{Y-E\left(Y\mid C\right)\right\}A\right]-E\left[k(C)E\left\{Y-E\left(Y\mid C\right)\mid C,X^*\right\}E(A\mid C,X^*)\right]\\
&+E\left[k(C)E\left\{Y-E(Y\mid C)\mid C\right\}\left\{E(A\mid C,X^*=0)-g_{A,1}(C;\alpha_1)-g_{A,2}(C;\bar{\alpha}_2)^TE(\varepsilon^*)\right\}\right]\\
=&E\left[k(C)\left\{Y-E\left(Y\mid C\right)\right\}A\right]-E\left\{k(C)E\left(E[\{Y-E\left(Y\mid C\right)\}A\mid A,C,X^*]\mid C,X^*\right)\right\}\\
=&0,
\end{align*}
since $\varepsilon^*$ is mean independent of $Y$ and $C$, and $E(Y\mid A,C,X^*)=E(Y\mid C,X^*)$, which is implied by H$_0$ and Assumption 1. Finally, there exists some $\tilde{\alpha}_1$ such that
\[E\left[\left\{\ell(C)Y+m(C)\right\}\Delta_A(\tilde{\alpha}_1,\bar{\alpha}_2)\right]=0.\]

Thus, under any law in $\mathcal{M}_\cup$, $E\{U(\alpha,\gamma)\}=0$ has a solution under H$_0$. The regularity conditions on $U(\alpha,\gamma)$ are sufficient to ensure that $(\bar{\alpha},\tilde{\gamma})$ is a local minimum of $n\hat{U}_n(\alpha,\gamma)^T\hat{\Omega}_n^{-1}\hat{U}_n(\alpha,\gamma)$ under $\mathcal{M}_A$ and $(\alpha_0^*,\tilde{\alpha}_1,\bar{\alpha}_2,\bar{\gamma})$ is a local minimum under $\mathcal{M}_Y$, and hence $(\alpha,\gamma)$ is locally identified under $\mathcal{M}_\cup$ and H$_0$. Then because $\mathrm{dim}[U(\alpha,\gamma)]=\mathrm{dim}(\alpha)+\mathrm{dim}(\gamma)+q$, $E\{U(\alpha,\gamma)\}=0$ is an overidentified moment restriction, and the statistic $\chi^2_{dr}$ has a limiting distribution of $\chi^2_q$.

\end{proof}
\begin{proof}[Proof of Theorem 4]First, reparameterize the propensity-score model as
\[\text{logit Pr}(A=1\mid C,X^*)=\alpha_0+g_A(C;\alpha_1)+\alpha_2^T X^*,\]
for model (4) or
\[\log E(A\mid C,X^*)=\alpha_0+g_A(C;\alpha_1)+\alpha_2^T X^*,\]
for model (5), where $g_A(0;\alpha_1)=0$, such that $\alpha_0$ is a scalar intercept and $\alpha_1$ has dimension $p_1-1$. When $A$ is binary, for the true value $\bar{\alpha}$ of $\alpha$, we have
\begin{align*}
&E\left(\left\{\ell(C)Y+m(C)\right\}\exp \left( -\bar{\alpha }_2^TXA\right) \left[ A-\text{expit}\left\{\alpha_0^*+g_A(C;\bar{\alpha}_1)\right\} \right] \right)  \\
=&E\left(\left\{\ell(C)Y+m(C)\right\}\exp \left( -\bar{\alpha}_2^TX^*A-\bar{\alpha}_2^T\varepsilon^* A\right) \left[ A-\text{expit}\left\{\alpha_0^*+g_A(C;\bar{\alpha}_1)\right\} \right] \right)  \\
=&E\left(\left\{\ell(C)Y+m(C)\right\}\exp \left( -\bar{\alpha}_2^TX^*A\right) 
E\left\{ \exp \left( -\bar{\alpha}_2^T\varepsilon^* A\right) \mid A,Y,C,X^*\right\}\right. \\
&\left.\times\left[ A-\text{expit}\left\{\alpha_0^*+g_A(C;\bar{\alpha}_1)\right\} \right] \right)  \\
=&E\left(\left\{\ell(C)Y+m(C)\right\}\exp \left( -\bar{\alpha}_2^TX^*A\right) \exp
\left( KA\right) \left[ A-\text{expit}\left\{\alpha_0^*+g_A(C;\bar{\alpha}_1)\right\} \right] \right) 
\end{align*}%
where $\exp \left( K\right) =E\left\{ \exp \left( -\bar{\alpha}_2^T\varepsilon^* \right) \right\} $ is the moment generating function of $\varepsilon^*$ evaluated at $-\bar{\alpha}_2$. We then note that the joint
density of $\left( A,X^*\right) $ given $C$ can be expressed as
\[
f\left( A,X^*\mid C\right) =\frac{f\left( X^*\mid A=0,C\right) \exp \left( \bar{\alpha}_{2}^TX^*A\right) f\left( A\mid X^*=0,C\right) }{t(C)}
\]%
where $t(C)$ is a normalizing constant. We then have that 
\begin{align*}
&E\left\{\left\{\ell(C)Y+m(C)\right\}\exp \left( -\bar{\alpha}_2^TX^*A\right) \exp
\left( KA\right) \left[ A-\text{expit}\left(\alpha_0^*+g(C;\bar{\alpha}_1)\right) \right] \right\}  \\
=&E\int_x \sum_{a}\frac{f\left( x|A=0,C\right) \exp \left( \bar{\alpha}_2^Txa\right) f\left( a|X=0,C\right) }{t(C)} \\
&\times\left[\ell(C)E\left\{Y\mid a,x,C\right\}+m(C)\right]\exp \left(
-\bar{\alpha}_2^Txa\right) \exp \left( Ka\right) \left[ a-\text{expit}\left(
\alpha_0^*+g(C;\bar{\alpha}_1)\right) \right]dx \\
=&E\int_x \left[\ell(C)E\left\{Y\mid x,C\right\}+m(C)\right]f\left(
x|A=0,C\right) t(C)^{-1}dx \\
&\times \sum_{a}f\left( a|X=0,C\right) \exp \left( Ka\right) \left[ a-\text{%
expit}\left(\alpha_0^*+g_A(C;\bar{\alpha}_1)\right) \right]  \\
=&E\int_x\left[\ell(C)E\left\{Y\mid x,C\right\}+m(C)\right]f\left(
x|A=0,C\right) t(C)^{-1}dx\frac{1+\exp \left(\alpha_0^*+g_A(C;\bar{\alpha}_1)\right) }{1+\exp \left(\bar{\alpha}_0+g_A(C;\bar{\alpha}_1)\right) } \\
&\times \sum_{a}\frac{\exp \left(\alpha_0^*a+g_A(C;\bar{\alpha}_1)a\right) }{1+\exp \left(\alpha_0^*+g_A(C;\bar{\alpha}_1)\right) }\left[ a-\text{expit}\left(\alpha_0^*+g_A(C;\bar{\alpha}_1)\right) \right]\\
=&0
\end{align*}%
where $\alpha _{0}^{\ast }=K+\bar{\alpha}_{0}$.

When $A$ is a count, let $\exp(K)=E\left\{\exp(\bar{\alpha}_2^T\varepsilon^*)\right\}$, i.e., the moment generating function of $\varepsilon^*$ evaluated at $\bar{\alpha}_2$. Under model (5), for the true value $\bar{\alpha}$ of $\alpha$, we have
\begin{align*}
&E\left(\left\{\ell(C)Y+m(C)\right\}\left[A-\exp\left\{\alpha_0^*+g_A(C;\bar{\alpha}_1)+\bar{\alpha}_2^TX\right\}\right]\right)\\
=&E\left[\left\{\ell(C)Y+m(C)\right\}A\right]-E\left[\left\{\ell(C)Y+m(C)\right\}\exp\left\{\alpha_0^*+g_A(C;\bar{\alpha}_1)+\bar{\alpha}_2^TX^*\right\}\right.\\
&\times \left.E\left\{\exp(\bar{\alpha}_2^T\varepsilon^*)\right\}\right]\\
=&E\left[\left\{\ell(C)Y+m(C)\right\}A\right]-E\left[\left\{\ell(C)Y+m(C)\right\}\exp\left\{\bar{\alpha}_0+g_A(C;\bar{\alpha}_1)+\bar{\alpha}_2^TX^*\right\}\right]\\
=&E\left[\left\{\ell(C)Y+m(C)\right\}A\right]-E\left[\left\{\ell(C)E(Y\mid C,X^*)+m(C)\right\}E(A\mid C,X^*)\right]\\
=&E\left[\left\{\ell(C)Y+m(C)\right\}A\right]-E\left(E\left[\left\{\ell(C)E(Y\mid C,X^*,A)+m(C)\right\}A\mid C,X^*\right]\right)\\
=&E\left[\left\{\ell(C)Y+m(C)\right\}A\right]-E\left[E\left\{\left(E\left[\left\{\ell(C)Y+m(C)\right\}A\mid C,X^*,A\right]\right)\mid C,X^*\right\}\right]\\
=&0,
\end{align*}
where $\alpha _{0}^{\ast }=K+\bar{\alpha}_{0}$.

Thus, in either case, $E\{U(\alpha_0^*,\bar{\alpha}_1,\bar{\alpha}_2)\}=0$ under H$_0$. The regularity conditions on $U(\alpha)$ are sufficient to ensure that $(\alpha_0^*,\bar{\alpha}_1,\bar{\alpha}_2)$ is a local minimum of $n\hat{U}_n(\alpha)^T\hat{\Omega}_n^{-1}\hat{U}_n(\alpha)$, and hence $\alpha$ is locally identified under H$_0$. Then because $\mathrm{dim}\{U(\alpha)\}=\mathrm{dim}(\alpha)+q$, $E\{U(\alpha)\}=0$ is an overidentified moment restriction, and the statistic $\chi^2_{\mathrm{robust}_A}$ has a limiting distribution of $\chi^2_q$.
\end{proof}

\begin{Theorem}\label{thm:eff}
Let $\hat{\beta}(\ell,m)$ be the estimator solving $\mathbb{P}_nU(\ell,m;\beta)=0$ corresponding to the moment functions in Theorem 1, and define \[
d(C)\equiv E\{\Delta(\alpha)^2H(\psi)^2\mid C\}E\{\Delta(\alpha)^2\mid C\}-E\{\Delta(\alpha)^2H(\psi)\mid C\}E\{\Delta(\alpha)^2H(\psi)\mid C\},
\]

\[\ell^*(C)\equiv d(C)^{-1}\left[\begin{array}{c}
E\left\{\Delta(\alpha)^2\mid C\right\}E\left\{\Delta(\alpha)A\mid C\right\}\\
-\mathrm{Cov}\{\Delta(\alpha)^2,H(\psi)\mid C\}\nabla_{\alpha_1} g_{A,1}(C;\alpha_1)\\
\left[E\left\{\Delta(\alpha)^2\mid C\right\}E(H(\psi)X^T\mid C)-E\left\{\Delta(\alpha)^2H(\psi)\mid C\right\}E(X^T\mid C)\right]\\
\times\nabla_{\alpha_2} g_{A,2}(C;\alpha_2)\\
\end{array}\right],\]
and
\[m^*(C)\equiv d(C)^{-1}\left[\begin{array}{c}
-E\left\{\Delta(\alpha)^2H(\psi)\mid C\right\}E\left\{\Delta(\alpha)A\mid C\right\}\\
\mathrm{Cov}\{\Delta(\alpha)^2H(\psi),H(\psi)\mid C\}\nabla_{\alpha_1} g_{A,1}(C;\alpha_1)\\
-\left[E\left\{\Delta(\alpha)^2H(\psi)\mid C\right\}E(H(\psi)X^T\mid C) - E\left\{\Delta(\alpha)^2H(\psi)^2\mid C\right\}E(X^T\mid C)\right]\\
\times\nabla_{\alpha_2} g_{A,2}(C;\alpha_2)
\end{array}\right].\]
Under $\mathcal{M}_A$ and the causal model defined by equation (3), $\hat{\beta}(\ell^*,m^*)$ achieves the minimum asymptotic variance of all estimators in the class of estimators defined by the estimating equations in Theorem 1. The corresponding variance is $\E\left[\mathbf{U}(\ell^*,m^*;\beta)\mathbf{U}(\ell^*,m^*;\beta)^T\right]$.
\end{Theorem}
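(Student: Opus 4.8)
The plan is to recognize the statement as an instance of the classical theory of optimal instruments (equivalently, optimal estimating functions) for a conditional moment restriction, as in \citet{hansen1982large} and the subsequent semiparametric literature, applied to the two-dimensional ``basic'' moment vector that generates the class of Theorem 1.

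First I would put the Theorem 1 estimating functions into standard form. Writing $\beta\equiv(\psi,\alpha_1^T,\alpha_2^T)^T$, $H\equiv H(\psi)=Y-\psi A$, and $\Delta\equiv\Delta_A(\alpha)=A-g_{A,1}(C;\alpha_1)-X^Tg_{A,2}(C;\alpha_2)$, and replacing $Y$ by $H(\psi)$ as in Section \ref{sec:power}, the estimating function is
\[
U(\beta;\ell,m)=\{\ell(C)H+m(C)\}\Delta=\ell(C)(H\Delta)+m(C)\,\Delta=B(C)\,r(\beta),
\]
where $r(\beta)\equiv[H\Delta,\ \Delta]^T$ and $B(C)\equiv[\,\ell(C)\mid m(C)\,]$ is a $(p+1)\times 2$ matrix-valued function of $C$ (here $q=\dim\psi=1$, so $\dim U=\dim\beta=p+1$ and the system is just-identified). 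As $(\ell,m)$ ranges over all pairs of $\mathbb{R}^{p+1}$-valued functions of $C$, $B$ ranges over all $(p+1)\times 2$ matrix functions of $C$, so the class of estimators of Theorem 1 is exactly $\{\hat\beta(B):\mathbb{P}_nB(C)r(\beta)=0\}$. The unbiasedness arguments behind Theorems 1 and 3, carried out conditionally on $C$ instead of marginally, show that at the true value $\beta_0$ the conditional moment restriction $E\{r(\beta_0)\mid C\}=0$ holds: $E(\Delta\mid C)=0$ uses $E(\varepsilon^*\mid C)=0$, while $E(H\Delta\mid C)=0$ uses additionally randomization of $A$ given $(C,X^*)$ under Assumption \ref{assumption} and the causal model (so that $\mathrm{Cov}(H,A\mid C,X^*)=0$), together with mean-independence of $\varepsilon^*$ from $(H,C,X^*)$.

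Given this reformulation the remaining steps are routine. First, standard estimating-equation theory gives, under the stated differentiability and full-rank conditions, $\sqrt n\{\hat\beta(B)-\beta_0\}\xrightarrow{d}N\{0,\Sigma(B)\}$ with $\Sigma(B)=\{E(BD)\}^{-1}E(B\Sigma_rB^T)\{E(BD)\}^{-T}$, where $D(C)\equiv E\{\nabla_\beta r(\beta_0)\mid C\}$ is $2\times(p+1)$ and $\Sigma_r(C)\equiv E\{r(\beta_0)r(\beta_0)^T\mid C\}$ is $2\times 2$. Second, a matrix Cauchy--Schwarz (Gauss--Markov) argument gives $\Sigma(B)\succeq\{E(D^T\Sigma_r^{-1}D)\}^{-1}$ for every $B$, with equality when $B$ is a constant-matrix multiple of $B^*(C)\equiv D(C)^T\Sigma_r(C)^{-1}$; hence $\hat\beta(B^*)$ attains the Loewner-minimal asymptotic variance $\{E(D^T\Sigma_r^{-1}D)\}^{-1}$ within the class. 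Third, I would compute $D(C)$ and $\Sigma_r(C)$ and read off $\ell^*,m^*$ as the two columns of $B^*$. Differentiating $r$ in $\beta$ and taking conditional expectations produces the block matrix $D(C)$ with entries built from $E(A\Delta\mid C)$, $E(H\mid C)\nabla_{\alpha_1}g_{A,1}$, $\nabla_{\alpha_1}g_{A,1}$, $E(HX^T\mid C)\nabla_{\alpha_2}g_{A,2}$ and $E(X^T\mid C)\nabla_{\alpha_2}g_{A,2}$, while $\Sigma_r(C)$ has diagonal $E(H^2\Delta^2\mid C),\,E(\Delta^2\mid C)$ and off-diagonal $E(H\Delta^2\mid C)$, so that $\det\Sigma_r(C)=d(C)$ precisely as in the statement. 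Thus $\Sigma_r(C)^{-1}$ supplies the $d(C)^{-1}$ prefactor together with the entries $E(\Delta^2\mid C)$, $-E(H\Delta^2\mid C)$, $E(H^2\Delta^2\mid C)$, and multiplying out $D(C)^T\Sigma_r(C)^{-1}$ reproduces the displayed $\ell^*(C)$ and $m^*(C)$ column by column (using rewritings such as $E(H\mid C)E(\Delta^2\mid C)-E(H\Delta^2\mid C)=-\mathrm{Cov}(\Delta^2,H\mid C)$), up to an immaterial overall sign. For the variance claim, since $U(\cdot;\ell^*,m^*)$ is a constant-matrix multiple of $B^*r$, the sandwich $\{E(\nabla_\beta U)\}^{-1}E(UU^T)\{E(\nabla_\beta U)\}^{-T}$ collapses to $\{E(D^T\Sigma_r^{-1}D)\}^{-1}$, which is the stated $E\{\mathbf{U}(\ell^*,m^*;\beta)\mathbf{U}(\ell^*,m^*;\beta)^T\}$ with $\mathbf{U}\equiv-\{E(\nabla_\beta U)\}^{-1}U$ the corresponding influence function.

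The main obstacle is the reformulation step: seeing that although $H(\psi)$ depends on both the data and the parameter, the Theorem 1 class is genuinely the full linear-instrument class over the two-dimensional basic moment $r$, and verifying that the conditional (not merely marginal) restriction $E\{r(\beta_0)\mid C\}=0$ holds under the paper's assumptions. Two side conditions must be carried along: $\Sigma_r(C)\succ 0$ almost surely, that is, $d(C)>0$, an instrument-relevance condition requiring that $H$ (hence $Y_0$, hence $X^*$) genuinely varies conditionally on $C$; and invertibility of $E(D^T\Sigma_r^{-1}D)$, which excludes degenerate choices of $g_A$. Finally, note that $B^*(C)$ depends on $\beta_0$ and on unknown conditional moments of the observed data, so the theorem furnishes an efficiency bound within the class; a feasible estimator substitutes consistent estimates of these nuisance quantities, with the additional uncertainty absorbed into $\hat\Omega_n$ via the stacking device described at the end of Section \ref{sec:power}.
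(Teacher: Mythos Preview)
Your proof is correct and reaches the same destination as the paper, but by a genuinely different route. The paper invokes Theorem 5.3 of \cite{newey1994large} directly: it writes down the characterizing equation $-E\{\nabla_\beta U(\ell,m;\beta)\}=E\{U(\ell,m;\beta)U(\tilde\ell,\tilde m;\beta)^T\}$ for all $(\ell,m)$, expands both sides, partitions into three blocks corresponding to $(\psi,\alpha_1,\alpha_2)$, and solves the resulting $2\times 2$ linear system componentwise for $(\ell_k^*,m_k^*)$, $k=1,2,3$. Your approach instead factors the estimating function as $U=B(C)r(\beta)$ with $r=[H\Delta,\Delta]^T$, establishes the \emph{conditional} moment restriction $E\{r(\beta_0)\mid C\}=0$ (a step the paper never makes explicit, working only with marginal expectations), and then appeals to the closed-form optimal-instrument formula $B^*=D^T\Sigma_r^{-1}$ from the Chamberlain/Hansen conditional-moment literature. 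This buys you a cleaner conceptual picture: the appearance of $d(C)=\det\Sigma_r(C)$ in the denominator and the particular linear combinations of conditional moments in $\ell^*,m^*$ are immediate consequences of inverting the $2\times 2$ matrix $\Sigma_r(C)$, rather than emerging from ad hoc algebra. The paper's approach, by contrast, is more self-contained and does not require the reader to recognize the problem as an instance of optimal-instrument theory. Your interpretation of the bold $\mathbf{U}$ in the variance statement as the influence function $-\{E(\nabla_\beta U)\}^{-1}U$ is also the right reading; at the optimum the generalized information equality collapses the sandwich to $\{E(D^T\Sigma_r^{-1}D)\}^{-1}$, matching the paper's claim.
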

\begin{proof}By Theorem 5.3 in \cite{newey1994large}, if an optimal estimator $\hat{\beta}(\tilde{\ell},\tilde{m})$ exists within the class $\{\hat{\beta}(\ell,m):\; \ell\in\mathcal{L}, m\in \mathcal{M}\}$, the functions $\tilde{\ell}$ and $\tilde{m}$ are guaranteed to satisfy
\begin{align}
-\E\left[\frac{\partial}{\partial \beta} \mathbf{U}(\ell,m;\beta)\right] = \E\left[\mathbf{U}(\ell,m;\beta)\mathbf{U}(\tilde{\ell},\tilde{m};\beta)^T\right]
\end{align}
for all functions $\ell$ and $m$, and the estimator will have variance equal to $\E\left[\mathbf{U}(\tilde{\ell},\tilde{m};\beta)\mathbf{U}(\tilde{\ell},\tilde{m};\beta)^T\right]$. Thus it suffices to show that $\ell^*(C)$ and $m^*(C)$ satisfy (2). We have
\begin{align*}
-E\left[\frac{\partial}{\partial \beta} \mathbf{U}(\ell,m;\beta)\right] =& E\left[\ell(C)A\Delta(\alpha), \;\; \{\ell(C)H(\psi)+m(C)\} \nabla_{\alpha_1} g_{A,1}(C;\alpha_1),\right.\\
&\left. \{\ell(C)H(\psi)+m(C)\}X^T\nabla_{\alpha_2} g_{A,2}(C;\alpha_2)\right]\\
&= E\left\{\left[\ell(C), m(C)\right]\left[
\begin{array}{c}
V\\
W
\end{array}\right]\right\},
\end{align*}
where $V=[V_1,V_2,V_3]\equiv [ A\Delta(\alpha), H(\psi)\nabla_{\alpha_1} g_{A,1}(C;\alpha_1), H(\psi)X^T\nabla_{\alpha_2} g_{A,2}(C;\alpha_2)]$ and $W=[W_1,W_2,W_3]\equiv [0, \nabla_{\alpha_1} g_{A,1}(C;\alpha_1), X^T\nabla_{\alpha_2} g_{A,2}(C;\alpha_2)]$. If we partition the components of the functions $(\ell^*,m^*)$ into $(\ell_1^*,m_1^*)$, $(\ell_2^*,m_2^*)$, and $(\ell_3^*,m_3^*)$, where $\ell_1^*$ and $m_1^*$ are scalar functions, $\ell_2^*$ and $m_2^*$ are $p_1$ dimensional, and $\ell_3^*$ and $m_3^*$ are $p_2$ dimensional, then
\begin{align*}
\E&\left[\mathbf{U}(\ell,m;\beta)\mathbf{U}(\ell^*,m^*;\beta)^T\right] = E\left[\Delta(\alpha)^2 \{\ell(C)H(\psi)+m(C)\}\{\ell^*(C)H(\psi)+m^*(C)\}^T\right]\\
= &E\left[\Delta(\alpha)^2\{\ell(C)H(\psi)+m(C)\}\left[\ell_1^*(C)H(\psi)+m_1^*(C), \;\; \{\ell_2^*(C)H(\psi)+m_2^*(C)\}^T, \right.\right.\\
&\left.\left.\{\ell_3^*(C)H(\psi)+m_3^*(C)\}^T\right]\right].
\end{align*}

Thus, we can solve (1) by partitioning it into four independent equations corresponding to the partition of $\ell^*$ and $m^*$: For all $k\in \{1,2,3\}$,
\[E\left[\Delta(\alpha)^2\{\ell(C)H(\psi)+m(C)\}\left\{\ell_k^*(C)H(\psi)+m_k^*(C)\right\}\right] = E\left\{\ell(C)V_k+ m(C)W_k\right\}.\]
\begin{align*}
E\left[ E\left[\Delta(\alpha)^2\left\{\ell_k^*(C)H(\psi)+m_k^*(C)\right\}H(\psi)\mid C\right] \ell(C)\right.&\\
\left.-E\left[\Delta(\alpha)^2\left\{\ell_k^*(C)H(\psi)+m_k^*(C)\right\}\mid C\right] m(C)\right]&=E\left\{\ell(C)V_k+ m(C)W_k\right\}
\end{align*}
$\Leftrightarrow$
\begin{align*}
E\left[ E\left[\Delta(\alpha)^2\left\{\ell_k^*(C)H(\psi)+m_k^*(C)\right\}H(\psi)-V_k\mid C\right] \ell(C)\right.&\\
\left.-E\left[\Delta(\alpha)^2\left\{\ell_k^*(C)H(\psi)+m_k^*(C)\right\}-W_k\mid C\right] m(C)\right]&=0
\end{align*}
$\Leftrightarrow$
\begin{align*}
E[\Delta(\alpha)^2H(\psi)^2\mid C]\ell_k^*(C) + E[\Delta(\alpha)^2H(\psi)\mid C]m_k^*(C) - E(V_k\mid C) &= 0\\
E[\Delta(\alpha)^2H(\psi)\mid C]\ell_k^*(C) + E[\Delta(\alpha)^2\mid C]m_k^*(C) - E(W_k\mid C) &= 0
\end{align*}
$\Leftrightarrow$
\[\left[\begin{array}{c}
\ell_k^*(C)\\
m_k^*(C)
\end{array}\right] = 
\left[\begin{array}{cc}
E\{\Delta(\alpha)^2H(\psi)^2\mid C\} & E\{\Delta(\alpha)^2H(\psi)\mid C\}\\
E\{\Delta(\alpha)^2H(\psi)\mid C\} & E\{\Delta(\alpha)^2\mid C\}
\end{array}\right]^{-1}
\left[\begin{array}{c}
E(V_k\mid C)\\
E(W_k\mid C)
\end{array}\right]\]
given that $\text{Pr}\{d(C)= 0\}=0$. The second implication can be seen to hold by recognizing the necessity of the first equation when $\ell(C)=E\left[\Delta(\alpha)^2\left\{\ell_k^*(C)H(\psi)+m_k^*(C)\right\}H(\psi)-V_k\mid C\right]$ and $m(C)=0$ and the necessity of the second equation when $\ell(C)=0$ and \[m(C)=E\left[\Delta(\alpha)^2\left\{\ell_k^*(C)H(\psi)+m_k^*(C)\right\}-W_k\mid C\right].\] Thus, (2) is solved by
\begin{align*}
\ell_1^*(C) =& d(C)^{-1}E\left\{\Delta(\alpha)^2\mid C\right\}E\left\{\Delta(\alpha)A\mid C\right\}\\
m_1^*(C) =& -d(C)^{-1}E\left\{\Delta(\alpha)^2H(\psi)\mid C\right\}E\left\{\Delta(\alpha)A\mid C\right\}\\
\ell_2^*(C) =& -d(C)^{-1}\mathrm{Cov}\{\Delta(\alpha)^2,H(\psi)\mid C\}\nabla_{\alpha_1} g_{A,1}(C;\alpha_1)\\
m_2^*(C) =& d(C)^{-1}\mathrm{Cov}\{\Delta(\alpha)^2H(\psi),H(\psi)\mid C\}\nabla_{\alpha_1} g_{A,1}(C;\alpha_1)\\
\ell_3^*(C) =& d(C)^{-1}\left[E\left\{\Delta(\alpha)^2\mid C\right\}E(H(\psi)X^T\mid C)\nabla_{\alpha_2} g_{A,2}(C;\alpha_2)\right.\\
&\left. - E\left\{\Delta(\alpha)^2H(\psi)\mid C\right\}E(X^T\mid C)\nabla_{\alpha_2} g_{A,2}(C;\alpha_2)\right]\\
m_3^*(C) =& -d(C)^{-1}\left[E\left\{\Delta(\alpha)^2H(\psi)\mid C\right\}E(H(\psi)X^T\mid C)\nabla_{\alpha_2} g_{A,2}(C;\alpha_2)\right.\\
& \left. - E\left\{\Delta(\alpha)^2H(\psi)^2\mid C\right\}E(X^T\mid C)\nabla_{\alpha_2} g_{A,2}(C;\alpha_2)\right].
\end{align*}
\end{proof}

\begin{proof}[Proof of validity of the robust outcome-regression test under model (5)]
Let $\bar{\gamma}$ and $\bar{\alpha}_2$ be the true values of $\gamma$ and $\alpha_2$ respectively. Under $\mathcal{M}_Y$, $E\{S(\bar{\gamma})\}=0$ and
\begin{align*}
&E\left[k(C)\left\{Y-g_Y(C;\bar{\gamma})\right\}A\exp(-\bar{\alpha}_2^TX)\right]\\
=&E\left[k(C)E\left\{Y-E\left(Y\mid C\right)\mid C,X^*,A\right\}A\frac{E(A\mid C,X^*=0)}{E(A\mid C,X^*)}E\left\{\exp(-\bar{\alpha}_2^T\varepsilon^*)\right\}\right]\\
=&E\left[k(C)E\left\{Y-E\left(Y\mid C\right)\mid C,X^*\right\}A\frac{E(A\mid C,X^*=0)}{E(A\mid C,X^*)}\right]e^K\\
=&E\left[k(C)E\left\{Y-E\left(Y\mid C\right)\mid C,X^*\right\}E(A\mid C,X^*)\frac{E(A\mid C,X^*=0)}{E(A\mid C,X^*)}\right]e^K\\
=&E\left[k(C)E\left\{Y-E\left(Y\mid C\right)\mid C\right\}E(A\mid C,X^*=0)\right]e^K\\
=&0,
\end{align*}
where $e^K=E\left\{\exp(-\bar{\alpha}_2^T\varepsilon^*)\right\}$, since $\varepsilon^* \ci \{X^*,C,A,Y\}$, and $E(Y\mid A,C,X^*)=E(Y\mid C,X^*)$, which is implied by H$_0$ and Assumption 1. The regularity conditions on $U(\alpha_2,\gamma)$ are sufficient to ensure that $(\bar{\alpha}_2,\bar{\gamma})$ is a local minimum of $n\hat{U}_n(\alpha_2,\gamma)^T\hat{\Omega}_n^{-1}\hat{U}_n(\alpha_2,\gamma)$, and hence $(\alpha_2,\gamma)$ is locally identified under H$_0$. Then because $\mathrm{dim}\{U(\alpha_2,\gamma)\}=\mathrm{dim}(\alpha_2)+\mathrm{dim}(\gamma)+q$, $E\{U(\alpha_2,\gamma)\}=0$ is an overidentified moment restriction, and the statistic $\chi^2_{ror}$ has a limiting distribution of $\chi^2_q$.
\end{proof}

\begin{proof}[Proof of validity of the doubly-robust test under model (5)]
First, reparameterize the propensity-score model as \[\log E(A\mid C,X^*)=\alpha_0+g_A(C;\alpha_1)+\alpha_2^T X^*,\]
where $g_A(0;\alpha_1)=0$, such that $\alpha_0$ is a scalar intercept and $\alpha_1$ has dimension $p_1-1$. Let $\exp(K)=E\left\{\exp(\bar{\alpha}_2^T\varepsilon^*)\right\}$, i.e., the moment generating function of $\varepsilon^*$ evaluated at $\bar{\alpha}_2$, $\bar{\gamma}$ be the true value of $\gamma$ under $\mathcal{M}_Y$, and $\bar{\alpha}$ be the true value of $\alpha$ under $\mathcal{M}_A$. Under $\mathcal{M}_A$, there exists some $\tilde{\gamma}$ such that $E\{S(\tilde{\gamma})\}=0$, and for any such $\tilde{\gamma}$,
\begin{align*}
&E\left[\begin{array}{c}
k(C)\Delta_Y(\tilde{\gamma})\Delta_A(\bar{\alpha})\\
\left\{\ell(C)Y+m(C)\right\}\Delta_A(\bar{\alpha})
\end{array}\right]\\
=&E\left(\left[\begin{array}{c}k(C)\left\{Y-g_Y(C;\tilde{\gamma})\right\}\\
\ell(C)Y+m(C)\end{array}\right]\left[A\exp(-\bar{\alpha}_2^TX)-\exp\left\{\alpha_0^*+g_A(C;\bar{\alpha}_1)\right\}\right]
\right)\\
=&E\Biggl(\left[\begin{array}{c}k(C)\left\{E(Y\mid C,X^*,A)-g_Y(C;\tilde{\gamma})\right\}\\
\ell(C)E(Y\mid C,X^*,A)+m(C)\end{array}\right]\\
&\times\left[A\frac{E(A\mid C,X^*=0)}{E(A\mid C,X^*)}E\left\{\exp(-\bar{\alpha}_x^T\varepsilon^*)\right\}-\exp\left\{K+\bar{\alpha}_0+g_A(C;\bar{\alpha}_1)\right\}\right]
\Biggr)\\
=&e^KE\left(\left[\begin{array}{c}k(C)\left\{E(Y\mid C,X^*)-g_Y(C;\tilde{\gamma})\right\}\\
\ell(C)E(Y\mid C,X^*)+m(C)\end{array}\right]\left[A\frac{E(A\mid C,X^*=0)}{E(A\mid C,X^*)}-E(A\mid C,X^*=0)\right]
\right)\\
=&e^KE\Biggl(\left[\begin{array}{c}k(C)\left\{E(Y\mid C,X^*)-g_Y(C;\tilde{\gamma})\right\}\\
\ell(C)E(Y\mid C,X^*)+m(C)\end{array}\right]\\
&\times\left[E(A\mid C,X^*)\frac{E(A\mid C,X^*=0)}{E(A\mid C,X^*)}-E(A\mid C,X^*=0)\right]
\Biggr)\\
=&0.
\end{align*}
Under $\mathcal{M}_Y$, $E\left\{S(\bar{\gamma})\right\}=0$, and for any $\alpha_1$,
\begin{align*}
&E\left[k(C)\Delta_Y(\tilde{\gamma})\Delta_A(\alpha_1,\bar{\alpha_2})\right]\\
=&E\left(k(C)\left\{Y-E(Y\mid C)\right\}\left[A\exp(-\bar{\alpha}_2^TX)-\exp\left\{g_A(C;\alpha_1)\right\}\right]\right)\\
=&E\left(k(C)E\left\{Y-E(Y\mid C)\mid C,X^*,A\right\}\left[A\frac{E(Y\mid C,X^*=0)}{E(Y\mid C,X^*)}\exp(-\bar{\alpha}_2^T\varepsilon^*)-\exp\left\{g_A(C;\alpha_1)\right\}\right]\right)\\
=&E\left(k(C)E\left\{Y-E(Y\mid C)\mid C,X^*\right\}\left[A\frac{E(Y\mid C,X^*=0)}{E(Y\mid C,X^*)}e^K-\exp\left\{g_A(C;\alpha_1)\right\}\right]\right)\\
=&E\left(k(C)E\left\{Y-E(Y\mid C)\mid C,X^*\right\}\left[E(Y\mid C,X^*)\frac{E(Y\mid C,X^*=0)}{E(Y\mid C,X^*)}e^K-\exp\left\{g_A(C;\alpha_1)\right\}\right]\right)\\
=&E\left(k(C)E\left\{Y-E(Y\mid C)\mid C\right\}\left[E(Y\mid C,X^*=0)e^K-\exp\left\{g_A(C;\alpha_1)\right\}\right]\right)\\
=&0,
\end{align*}
since $\varepsilon^* \ci \{X^*,C,A,Y\}$ and $E(Y\mid A,C,X^*)=E(Y\mid C,X^*)$, which is implied by H$_0$ and Assumption 1. Finally, there exists some $\tilde{\alpha}_1$ such that
\[E\left[\left\{\ell(C)Y+m(C)\right\}\Delta_A(\alpha_0^*,\tilde{\alpha}_1,\bar{\alpha}_2)\right]=0.\]
Thus, under any law in $\mathcal{M}_\cup$, $E\{U(\alpha,\gamma)\}=0$ has a solution under H$_0$. The regularity conditions on $U(\alpha,\gamma)$ are sufficient to ensure that $(\bar{\alpha},\tilde{\gamma})$ is a local minimum of $n\hat{U}_n(\alpha,\gamma)^T\hat{\Omega}_n^{-1}\hat{U}_n(\alpha,\gamma)$ under $\mathcal{M}_A$ and $(\alpha_0^*,\tilde{\alpha}_1,\bar{\alpha}_2,\bar{\gamma})$ is a local minimum under $\mathcal{M}_Y$, and hence $(\alpha,\gamma)$ is locally identified under $\mathcal{M}_\cup$ and H$_0$. Then because $\mathrm{dim}[U(\alpha,\gamma)]=\mathrm{dim}(\alpha)+\mathrm{dim}(\gamma)+q$, $E\{U(\alpha,\gamma)\}=0$ is an overidentified moment restriction, and the statistic $\chi^2_{dr}$ has a limiting distribution of $\chi^2_q$.
\end{proof}

\newpage

\bibliographystyle{apalike}

\bibliography{references}

\end{document}